\documentclass[11pt]{article}

\usepackage[utf8]{inputenc}
\usepackage[T1]{fontenc}
\usepackage{graphicx}
\usepackage{amssymb}
\usepackage{amsmath}
\usepackage{xcolor}
\usepackage{tcolorbox}
\usepackage{soul}
\usepackage{amsthm}
\usepackage{hyperref}
\usepackage{booktabs}
\usepackage{makecell}
\usepackage{authblk} 
\usepackage[authoryear]{natbib}
\theoremstyle{plain}
\newtheorem{lemma}{Lemma}
\newtheorem{proposition}{Proposition}

\newtheorem{theorem}{Theorem}
\newtheorem{corollary}{Corollary}

\theoremstyle{definition}
\newtheorem{example}{Example}
\newtheorem{definition}{Definition}

\theoremstyle{remark}

\DeclareMathOperator*{\argmin}{arg\,min}

\newcommand{\mset}[1]{\{\!\!\{ #1 \}\!\!\}}

\def\Sigma{{m}}

\title{Consistency, unanimity, and the Borda rule in social ranking}

\author[1]{Takahiro Suzuki\thanks{Corresponding author: \texttt{suzuki-tkenmgt@g.ecc.u-tokyo.ac.jp}}}
\author[2]{Stefano Moretti}
\author[2,3]{Rachel Ruell\'{e}}

\affil[1]{Department of Civil Engineering, Graduate School of Engineering, The University of Tokyo, Tokyo, Japan}
\affil[2]{LAMSADE, CNRS, Université Paris-Dauphine, Université PSL, Paris, France}
\affil[3]{ENS Rennes, Bruz, France}

\date{}

\begin{document}

\maketitle

\begin{abstract}
The social ranking is a recently proposed framework for evaluating the power of individuals according to the performance ranking of their coalitions. Although its origin can be traced to the classical power indices in simple games, social ranking approaches carry out this evaluation within the ordinal framework of social choice theory. This article introduces the Borda rule into social ranking. Specifically, we focus on two essential properties of the Borda rule---consistency and closeness to unanimity---and investigate the social ranking solutions (SRSs) satisfying these properties. Among several possible definitions of the Borda rule as an SRS, we characterize one of such solutions by (a weak version of) consistency, closeness to unanimity (under the linear and symmetric domain), neutrality (i.e., names of the individuals do not matter), and independence of perfunctory participation (i.e., adding a perfunctory coalition into the worst class of the coalitional ranking does not affect the social ranking). We therefore propose a new Borda-type SRS for evaluating the competence of individuals in coalitional contexts.
\end{abstract}

\section{Introduction}
There are many situations in which the importance or competence of individuals, such as athletes in team sports \citep{Algaba2021} or features in explainable AI \citep{Gourves2025}, is judged according to the performance ranking of their coalitions. The origins of this problem can be traced to the classical theory on power indices, such as the Shapley--Shubik power index \citep{shapley1954method} in simple games, but \citet{Moretti2015} and \citet{Moretti2017} have recently established the ordinal version of this problem, called the \textit{social ranking problem}. Theoretically, this is considered an ordinal counterpart to the measurement of power indices using the ordinal approach of social choice theory. 

The theory of the social ranking problem has developed rapidly over the past decade. Several rational social ranking solutions (SRSs) have been proposed axiomatically: the \textit{ceteris paribus} (CP) majority \citep{Haret2019a,Allouche2021,Suzuki2026a}, the ordinal Banzhaf index \citep{Khani2019}, the lexicographic excellence solution and its variants \citep{Bernardi2019,Algaba2021,Beal2022a,Suzuki2024c,Suzuki2024,Aleandri2024a,Suzuki}, plurality \citep{Suzuki2025_Mill_SRS}, the intersection initial segment \citep{Suzuki2026b}, and CP-Borda \citep{Ruelle2026}. The purpose of this article is to introduce a new and arguably important SRS: the Borda rule. 

The Borda rule is one of the central objects in social choice theory. Since \citet{Young1974a} characterized it by neutrality, consistency, faithfulness, and cancellation, many subsequent studies have axiomatically characterized the Borda rule \citep{Hansson1976,Nitzan1981a,Marchant2000,Heckelman2020} and investigated its properties and its counterparts in other models \citep{Black1976,Tabarrok1999,Ohseto2007,Suzuki2017b,Garcia-Lapresta2025}. 

The central difficulty in incorporating the Borda rule into the theory of SRSs is the ambiguity of its definitions. Usually, an SRS is defined as a function that maps a weak order of coalitions to a weak order of individuals. Although the ambiguity of the Borda rule in the presence of indifferences is well-known in the standard voting contexts \citep[see, e.g.,][]{Terzopoulou2021}, we also face the problem that each individual can appear several times in a single ranking (because they can belong to multiple coalitions). In fact, \citet{Suzuki2024} give two possible definitions of the Borda rule as an SRS, and \citet{Ruelle2026}  define what they call the CP-Borda rule on the basis of CP comparisons of individuals. Our strategy is to focus on some essential properties of the Borda rule and establish a corresponding SRS based on these properties. 

In particular, we focus on \textit{consistency} and \textit{closeness to unanimity}. Consistency demands that the social judgments from two disjoint sub-societies are ``consistent'' with that of the unified society; this is a familiar axiom in the characterization of the Borda rule \citep{Young1974a,Nitzan1981a} and scoring rules \citep{H.P.Young1975}. As for closeness to unanimity, \citet{Farkas1979} demonstrate that the winner(s) selected by the Borda rule are closest to unanimity in the sense that they are the set of alternatives that can reach unanimity (the top alternative of each voter) with the minimum number of swaps between neighboring alternatives. 

In this paper, the Borda SRS is defined such that (i)~the score of each coalition $S$ is the number of equivalence classes ranked lower than $S$,\footnote{Notice that under the standard Borda rule, an alternative's score is given by the number of alternatives ranked lower than it. The statement~(i) is obtained simply by replacing the lower alternatives with the lower equivalence classes. A discerning reader might wonder why we count lower equivalence classes rather than lower coalitions. Both of these will be formalized in Definition~\ref{definition:Borda_rules}, but it will be demonstrated that the former is better suited to the two fundamental properties.} (ii)~the score of each individual $i$ is the sum of the scores assigned to each of their coalitions, and (iii)~individuals with higher scores are ranked higher. Our main result (Theorem~\ref{theorem: characterization of B1}) characterizes this Borda SRS, denoted as $R^{B1}$, in terms of two fundamental axioms, (a weak version of) consistency and closeness to the unanimity, and two complementary axioms, neutrality (i.e., the names of the alternatives do not matter) and independence of perfunctory participation (i.e., adding a ``perfunctory'' coalition into the worst class does not affect the social ranking). A close relationship with the Banzhaf value in coalitional games will also be revealed (Proposition~\ref{proposition:banzhaf}).

The remainder of this paper is structured as follows. Section~\ref{section:model} presents the basic model, followed by possible definitions of the Borda rule as SRSs. In Section~\ref{section:preliminary result}, we define four possible definitions of Borda-like SRSs, each of which is considered ``closest to unanimity'' in some linear and symmetric domain. Our main result in Section~\ref{section:main result} characterizes one of the possible definitions using (a certain type of) consistency, the property of being closest to unanimity, and supplementary axioms. Concluding remarks are given in Section~\ref{section:conclusion}.

\section{Model}
\label{section:model}
\subsection{Basic model}
Let $X$ be the set of individuals, and let $\mathcal{X}=2^{X}$ be the set of all coalitions. For a set $A$, let $\mathcal{R}(A)$ be the set of all weak orders (i.e., reflexive, complete, and transitive binary relations) on $A$. We denote by $\mathbb{N}=\{0,1,2,\dots\}$ the set of all nonnegative integers and by $\mathbb{N}_{+}=\{1,2,\dots\}$ the set of all positive integers. For $n\in\mathbb{N}$, let $[n]:=\{k\in\mathbb{N}\mid 1\leq k\leq n\}$.  

A \textit{multiset} on a set $A$ is a function $m:A\rightarrow\mathbb{N}$. The set of all multisets on $A$ is denoted as $\mathcal{M}(A)$. For $a\in A$, $m(a)$ (the \textit{multiplicity} of $a$) indicates how many times $a$ appears in the multiset. We often denote a multiset by listing each element $a\in A$ a total of $m(a)$ times within $\mset{}$. For instance, if $A=\{a,b,c\}$, $m(a)=m(b)=2$, and $m(c)=1$, then we write $m=\mset{a,a,b,b,c}$. Two special multisets, $\emptyset$ and $1_{a}$ ($a\in A$), are defined as follows: $\emptyset (a)=0$ for all $a\in A$, and 
\begin{align*}
1_{a}(x)=\begin{cases}
1\quad&\text{if }x=a,\\
0\quad&\text{otherwise.}
\end{cases}
\end{align*}
For $m\in \mathcal{M}(A)$ and $a\in A$, with a slight abuse of notation, we write $a\in {m}$ if $m(a)>0$. For $m,m'\in M(A)$, $m+m'$ and $m\setminus m'$ are defined as follows: 
\begin{align*}
&(m+m')(a):=m(a)+m'(a)\quad \forall a\in A,\\
&(m\setminus m')(a):= \max\{0,m(a)-m'(a)\}.
\end{align*}

For $l\in\mathbb{N}_{+}$, let $\mathcal{R}_{l}:=\{(\Sigma_{l},\Sigma_{l-1},\dots,\Sigma_{1})\mid \Sigma_{k}\in\mathcal{M}(\mathcal{X})\setminus\{\emptyset\}\quad\text{for all }k\in[l]\}$ and $\mathcal{R}:=\bigcup_{l\in\mathbb{N}}\mathcal{R}_{l}$. An element $\mathord{\succsim}=(\Sigma_{l},\Sigma_{l-1},\dots,\Sigma_{1}) \in \mathcal{R}_{l}$ is called a \textit{coalitional ranking}, where the multiset $\Sigma_{k}$ is interpreted as the $k$-th-worst ``equivalence class'' in which each coalition $S\in \Sigma_k$ is represented with multiplicity $m_k(S)>0$. For $k>k'$, each element of $\Sigma_{k}$ is considered to be more competitive than that of $\Sigma_{k'}$. For $l\in \mathbb{N}_{+}$, let $\mathcal{L}_{l}:=\{(\Sigma_{l},\dots,\Sigma_{1})\in \mathcal{R}_{l}\mid \forall k\in[l],\;\lvert \Sigma_{k}\rvert=1\}$. Furthermore, for $x,y\in X$, we write $\mathcal{R}^{x,y}_{l}:=\{(\Sigma_{l},\dots,\Sigma_{1})\in\mathcal{R}_{l}\mid \sum_{k\in[l]}\sum_{S\in \Sigma_k: x \in S}\Sigma_{k}(S)=\sum_{k\in[l]}\sum_{T\in \Sigma_k: y \in T}\Sigma_{k}(T)\}$. In words, $\mathcal{R}^{x,y}_{l}$ is the set of all coalitional rankings in $\mathcal{R}_{l}$ in which the number of coalitions including $x$ and the number including $y$ are exactly the same. Similarly, we write $\mathcal{R}^{x,y}:=\bigcup_{l\in\mathbb{N}_{+}}\mathcal{R}^{x,y}_{l}$, $\mathcal{L}^{x,y}_{l}:=\mathcal{R}^{x,y}\cap\mathcal{L}_{l}$, and $\mathcal{L}^{x,y}:=\bigcup_{l\in \mathbb{N}_{+}}\mathcal{L}^{x,y}_{l}$. 

\begin{example}
\label{example:journals}
Suppose that there are three academic journals J1, J2, and J3 in some field. In general, J1 is ranked highest, followed by J2, and then J3. Suppose that Alice and Bob published their joint work twice in J1 and once in J2, Bob and Charlie published their joint work once in J1 and once in J3, Alice and Charlie each published one single-authored work in J3, Bob published two single-authored papers in J3, and Charlie published one single-authored paper in J2. This situation is expressed as follows: 
$\mathord{\succsim}=(\mset{\{\text{A},\text{B}\},\{\text{A},\text{B}\},\{\text{B},\text{C}\}},\mset{\{\text{A},\text{B}\},\{\text{C}\}},\mset{\{\text{A}\},\{\text{C}\},\{\text{B},\text{C}\},\{\text{B}\},\{\text{B}\}})$, where A is Alice, B is Bob, and C is Charlie. By definition, we have $\mathord{\succsim}\notin\mathcal{R}^{\text{A},\text{B}}_{3}$ (because Alice published fewer papers overall than Bob), $\mathord{\succsim}\notin \mathcal{L}_{3}$ (because the equivalence classes are not singletons), and $\mathord{\succsim}\in\mathcal{R}^{\text{A},\text{C}}_{3}$. 
\end{example}

For a set $A$, let $\mathcal{B}(A)$ denote the set of all reflexive and complete binary relations on $A$. For a binary relation $R$ on a set $A$, the restriction of $R$ into $B\subseteq A$ is denoted as $R\mid_{B}$. 

\begin{definition}
A \textit{social ranking solution} (SRS) is a function $R:\mathcal{R}\rightarrow\mathcal{B}(X)$. 
\end{definition}

We denote $R(\succsim)$ as $R_{\succsim}$. As usual, the asymmetric and symmetric parts of $R_{\succsim}$ are denoted as $P_{\succsim}$ and $I_{\succsim}$, respectively.

\subsection{Axioms}
For a permutation $\sigma:X\rightarrow X$ and $\Sigma\in \mathcal{M}(\mathcal{X})$, let $\Sigma_{\sigma}\in\mathcal{M}(\mathcal{\mathcal{X}})$ be such that $\Sigma^{\sigma}(S):=\Sigma(\sigma(S))$, where $\sigma(S)=\{\sigma(x)\mid x\in S\}$. Intuitively, $\Sigma^{\sigma}$ is a multiset obtained from $\Sigma$ by replacing each $x\in X$ with $\sigma^{-1}(x)$. For $\mathord{\succsim}=(\Sigma_{l},\dots,\Sigma_{1})\in\mathcal{R}_{l}$, let $\mathord{\succsim}^{\sigma}:=(\Sigma_{l}^{\sigma},\dots,\Sigma_{1}^{\sigma})$. 

\begin{definition}
An SRS $R$ satisfies \textit{neutrality} (NT) if for any $\mathord{\succsim}\in\mathcal{R}$, $x,y\in X$, and a permutation $\sigma$ on $X$, we have $xR_{\succsim^{\sigma}}y\Rightarrow \sigma(x)R_{\succsim}\sigma(y)$.
\end{definition}

NT is a standard axiom demanding that the names of the individuals do not matter. 
For $l\in\mathbb{N}_{+}$, $\mathord{\succsim}=(\Sigma_{l},\dots,\Sigma_{1})\in\mathcal{R}_{l}$, and $\mathord{\succsim}'=(\Sigma'_{l},\dots,\Sigma'_{1})\in\mathcal{R}_{l}$, let $\mathord{\succsim}+\mathord{\succsim}'\in\mathcal{R}_{l}$ be such that 
$\mathord{\succsim}+\mathord{\succsim}':=(\Sigma_{l}+\Sigma'_{l},\dots,\Sigma'_{1}+\Sigma_{1})$.

\begin{definition}
An SRS $R$ satisfies \textit{equi-size consistency} (ECON) if for any $l\in\mathbb{N}_{+}$, $\mathord{\succsim},\mathord{\succsim}'\in\mathcal{R}_{l}$, and $x,y\in X$, we have that 
\begin{enumerate}
\item if $xR_{\succsim}y$ and $xR_{\succsim'}y$, then $xR_{\succsim+\succsim'}y$, and 
\item if $xP_{\succsim}y$ and $xR_{\succsim'}y$, then $xP_{\succsim+\succsim'}y$.
 \end{enumerate}
\end{definition}

In a study on social ranking, \citet{Suzuki2024} define consistency as the condition that the above conditions 1 and~2 hold for \textit{any} combination of two disjoint coalitional orders $\succsim$ and $\succsim'$ (the scope of their definition includes cases in which $\succsim$ and $\succsim'$ have different numbers of classes, as well as those in which $\succsim$ and $\succsim'$ are concatenated somehow).  As a result, they show that some of the possible definitions of the Borda rule do not satisfy consistency. Our ECON is a weaker form of consistency: it requires only that the judgment for a merged coalitional order be consistent with those for two underlying coalitional orders, provided that they have the same number of classes and that each pair of corresponding classes is merged. 

For $\mathord{\succsim}=(\Sigma_{l},\dots,\Sigma_{1})\in\mathcal{R}_{l}$, $k\in[l]$, and $m\in\mathcal{M}(\mathcal{X})$, let $\mathord{\succsim}+(m)_{k}$ be a coalitional ranking that is obtained from $\succsim$ by adding $m$ to the $k$-th-worst equivalence class, that is, 
$\mathord{\succsim}+(m)_{k}:=(\Sigma_{l}',\dots,\Sigma'_{1})\in\mathcal{R}_{l}$, where 
\[\Sigma'_{j}=\begin{cases}
\Sigma_{j}+m\quad&\text{if }j=k,\\
\Sigma_{j}\quad&\text{otherwise.}
\end{cases}
\]

\begin{definition}
An SRS $R$ satisfies \textit{independence of perfunctory participation} (IPP) if for any $\mathord{\succsim}\in\mathcal{R}_{l}$, $x,y\in X$, and $S\in \mathcal{X}$ such that $x\in S\not\ni y$, we have that $R_{\succsim}\mid_{\{x,y\}}=R_{\succsim+(1_{S})_{1}}\mid_{\{x,y\}}$. 
\end{definition}

IPP demands that adding $S$, with $x\in S\not\ni y$, to the worst class of the coalitional ranking does not affect the social ranking between $x$ and $y$. One straightforward interpretation is that the publication of the lowest-quality work (such as perfunctory papers published in a predatory journal) does not alter one's evaluation. 

\begin{definition}
An SRS $R$ satisfies \textit{discrete continuity} (DCONT) if for any $l\in \mathbb{N}$ with $l\geq 2$, $k_{1}\in[l-1]$, $\mathord{\succsim}=(\Sigma_{l},\dots,\Sigma_{1})\in\mathcal{R}_{l}$, $x,y\in X$, and $S\in\mathcal{X}$, if $xP_{\succsim+(\mset{S})_{k_{1}}}y$, then $\neg (yP_{\succsim+(\mset{S})_{k_{1}+1}}x)$. 
\end{definition}
DCONT demands that the social ranking can change only ``continuously'': a shift of $S$ to a neighboring class does not completely reverse the social ranking between two individuals. It is interesting to consider a logically equivalent definition of DCONT. Let us say that $R$ satisfies DCONT* if for any $l\in \mathbb{N}$ with $l\geq 2$, $k_{1}\in[l-1]$, $\mathord{\succsim}=(\Sigma_{l},\dots,\Sigma_{1})\in\mathcal{R}_{l}$, $x,y\in X$, and $\Gamma\in \mathcal{M}(\mathcal{X})$, if $xP_{\succsim+(\Gamma )_{k_{1}}}y$ and $yP_{\succsim+(\Gamma)_{k_1+1}}x$, then there exist $\Gamma_{1},\Gamma_{2}\in\mathcal{M}(\mathcal{X})$ with $\Gamma=\Gamma _{1}+\Gamma _{2}$ such that $xI_{\succsim+(\Gamma_{1})_{k_{1}}+(\Gamma_{2})_{k_{1}+1}}y$. In other words, if moving $\Gamma$ to the neighboring class (the $k$-th class to the $(k+1)$-th class) reverses the social ranking between $x$ and $y$, then there must exist $\Gamma_{1}$ (which is a subset of $\Gamma$) such that the shift of $\Gamma$ makes $x$ and $y$ indifferent. We can verify that DCONT and DCONT* are logically equivalent as follows. 

To show that DCONT$\Rightarrow$DCONT*, let $\Gamma=\mset{S_{1},\dots,S_{m}}$ and suppose that $xP_{\succsim+(\Gamma)_{k_{1}}}y$ and $yP_{\succsim+(\Gamma)_{k_1+1}}x$. We will prove the existence of $\Gamma_{1}$ and $\Gamma_{2}$  with $\Gamma=\Gamma _{1}+\Gamma _{2}$ such that $xI_{\succsim+(\Gamma_{1})_{k_{1}}+(\Gamma_{2})_{k_{1}+1}}y$.
For each $\mu=1,2,\dots,m$, let $\mathord{\succsim}^{(0)}:=\mathord{\succsim}+(\Gamma)_{k_{1}}$, and let $\mathord{\succsim}^{(\mu)}$ be the coalitional ranking obtained from $\mathord{\succsim}^{(\mu-1)}$ by shifting $S_{\mu}$ from the $k_{1}$-th class to the $(k_{1}+1)$-th class, that is, $\mathord{\succsim}^{(\mu)}:=\mathord{\succsim}+(\mset{S_{1},\dots,S_{\mu}})_{k_{1}+1}+(\mset{S_{\mu+1},\dots,S_{m}})_{k_{1}}$. By assumption, we have $xP_{\succsim^{(0)}}y$. Thus, DCONT (under the assumption that $R_{\mathord{\succsim^{(1)}}}$ is complete) implies that either $xP_{\succsim^{(1)}}y$ or $xI_{\succsim^{(1)}}y$. If the latter holds, the proof can be completed by letting $\Gamma_{1}=\mset{S_{1}}$. Assume the former. Then, we can apply DCONT again. It follows that either $xP_{\succsim^{(2)}}y$ or $xI_{\succsim^{(2)}}y$. If the latter holds, the proof is completed (by letting $\Gamma_{1}=\mset{S_{1},S_{2}}$). Assume the former. We can repeat this process to find a desired $\Gamma_{1}$. Notice that this process cannot continue until $\mu=m-1$, because in such a case, DCONT implies either $xP_{\succsim^{(m)}}y$ or $xI_{\succsim^{(m)}}y$. However, $\mathord{\succsim}^{(m)}=\mathord{\succsim}+(\Gamma)_{k_{1}+1}$ by the definition. So, we have $yP_{\succsim^{(m)}}x$ by assumption. This means that we must encounter a desired $\Gamma_{1}$ within the above process. 

For DCONT*$\Rightarrow$DCONT, assume that $xP_{\succsim+(\mset{S})_{k_{1}}}y$. Let $\Gamma:=\mset{S}$ in the definition of DCONT*. Then, it follows that there must exist $\Gamma_{1},\Gamma_{2}\in\mathcal{M}(\mathcal{X})$ with $\Gamma=\Gamma_{1}+\Gamma_{2}$ such that $xI_{\succsim+(\Gamma_{1})_{k_{1}+1}+(\Gamma_{2})_{k_{1}}}y$. Since $\Gamma$ is a singleton, we have either $(\Gamma_{1},\Gamma_{2})=(\Gamma,\emptyset)$ or $(\Gamma_{1},\Gamma_{2})=(\emptyset,\Gamma)$. However, the latter is inappropriate because it means that no shift is made, so it cannot hold that $xP_{\succsim+(\Gamma)_{k_{1}}}y$ and $xI_{\succsim+(\Gamma_{1})_{k_{1}+1}+(\Gamma_{2})_{k_{1}}}y$. Hence, $(\Gamma_{1},\Gamma_{2})=(\Gamma,\emptyset)$ and $\neg (yP_{\succsim+(\mset{S})_{k_{1}+1}}x)$. This proves DCONT.

\section{Preliminary results on alternative definitions of the Borda rule}
\label{section:preliminary result}

\subsection{The Borda rule and the inversion number}
\label{subsection:Borda rule and the inversion number}
One of the noteworthy properties of the Borda rule is that its winners are the closest to unanimous agreement. \citet{Farkas1979} show that the alternatives selected by the Borda rule can become the most preferred alternatives by everyone with the minimum number of swaps. For instance, consider the following preference profile: 
\begin{align*}
&\mathord{\succsim}_{1}:b,a,c;\\
&\mathord{\succsim}_{2}:b,c,a;\\
&\mathord{\succsim}_{3}:c,b,a;\\
&\mathord{\succsim}_{4}:a,c,b.
\end{align*}
A \textit{swap} is an interchange in the ranking of two neighboring alternatives (e.g., $a$ and $c$ in $\succsim_{1}$). In the above case, $\succsim_{1}$ can be transformed to $\succsim_{2}$ with a single swap (i.e., interchanging $a$ and $c$), $\succsim_{3}$ can be transformed to $\succsim_{4}$ with two swaps (interchanging the positions of $a$ and $b$ and then those of $a$ and $c$). Farkas and Nitzan show that the winners of the Borda rule are equivalent to $\argmin_{x} s(x)$, where $s(x)$ represents the number of swaps required to move $x$ to the top of everyone's preferences.

The purpose of this subsection is to establish a parallel argument in our social ranking framework. For $\mathord{\succsim}=(\Sigma_{l},\dots,\Sigma_{1}) \in\mathcal{L}_{l}$ and $x\in X$, we say that $x$ is \textit{unanimously supported} if  there exists $k\in[l]$ such that each $\Sigma_{l},\dots,\Sigma_{k}$ is made up of only coalitions including $x$, and each $\Sigma_{k-1},\dots,\Sigma_{1}$ is made up of only coalitions not including $x$. We denote by $\hat{i}_{\succsim}(x)$ the minimum number of swaps that need to be performed on $\succsim$ for $x$ to become unanimously supported.

\begin{proposition}
\label{proposition:inversion_formula}
Let $l\in\mathbb{N}_{+}$. For each $\mathord{\succsim}\in\mathcal{L}_{l}$ and $x\in X$, $\hat{i}_{\succsim}(x)$ is given by:
$\hat{i}_{\succsim}(x)=\sum_{k\in[l]}\sum_{S\in \Sigma_k: x \in S}\Sigma_{k}(S)\cdot \sum_{k'>k}\sum_{T\in \Sigma_{k'}: x \notin T}\Sigma_{k'}(T)$.
\end{proposition}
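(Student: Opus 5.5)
In $\mathcal{L}_{l}$ every class $\Sigma_k$ is a single coalition, so $\succsim$ is simply a sequence $(S_l,\dots,S_1)$ of coalitions, listed from best to worst, and a swap interchanges two neighbouring positions. Label each position by a bit: $1$ if $x\in S_k$ and $0$ if $x\notin S_k$. Then $x$ is unanimously supported exactly when the bit string, read from position $l$ down to position $1$, has the form $1\cdots10\cdots0$. In this language the right-hand side of Proposition~\ref{proposition:inversion_formula} counts the pairs $(k,k')$ with $k<k'$, $x\in S_k$ and $x\notin S_{k'}$. These are the ``inversions'': a coalition containing $x$ sits below a coalition not containing $x$. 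Write $N(\succsim)$ for this number. The plan is the classical argument that the minimum number of adjacent transpositions needed to sort a binary sequence equals its inversion count. It has two inequalities.

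\emph{Lower bound.} Fix an arbitrary swap of the coalitions at positions $k+1$ and $k$. Only the relative order of that one pair changes; every other pair keeps its relative order. Hence $N$ changes by at most $1$. A ranking in which $x$ is unanimously supported has $N=0$, since no coalition with $x$ lies below one without $x$. Conversely, $N=0$ forces the pattern $1\cdots10\cdots0$. It follows that at least $N(\succsim)$ swaps are needed, so $\hat{i}_{\succsim}(x)\geq N(\succsim)$.

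\emph{Upper bound.} We argue by induction on $N$. If $N>0$, then some pair $(k,k')$ with $k<k'$ has $x\in S_k$ and $x\notin S_{k'}$. Scanning the bit string between positions $k$ and $k'$ shows that there must be an adjacent position $j$ with $x\in S_j$ and $x\notin S_{j+1}$. Swapping positions $j$ and $j+1$ removes exactly that inversion and leaves all other pairs unchanged, so $N$ drops by exactly $1$. After $N$ such swaps we reach $N=0$, where $x$ is unanimously supported. Thus $\hat{i}_{\succsim}(x)\leq N(\succsim)$, and the two bounds give the formula.

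The only delicate point is bookkeeping. I must check that the double sum with multiplicities reduces to the pair count: in $\mathcal{L}_l$ all multiplicities equal $1$, so the factor $\Sigma_k(S)$ is just the indicator of $x\in S_k$. I must also confirm that the ranking stays in $\mathcal{L}_l$ throughout. This holds because a swap permutes singleton classes, so the number of classes $l$ never changes, and the unanimity definition for the final ranking can use $k$ equal to one plus the number of $0$-positions. The degenerate cases where every coalition contains $x$, or none does, give $N=0$ and need zero swaps, which is consistent with the formula.
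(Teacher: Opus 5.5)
Your proof is correct and takes the paper's route: encode membership of $x$ as a $0/1$ vector and identify the minimum number of adjacent swaps with its inversion number. The difference is that the paper only cites this classical fact, whereas you prove it: each swap changes $N$ by at most one and $N=0$ holds exactly for the sorted pattern, and an adjacent inversion always exists whose removal lowers $N$ by exactly one.

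One edge case deserves a caveat. If $x$ belongs to no $S_k$, your choice of $k$ (one plus the number of $0$-positions) equals $l+1\notin[l]$. Strictly speaking, your zero-swap claim there needs the unanimity definition to allow $k=l+1$; the paper's sketch passes over this case as well.
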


\begin{proof}[Sketch of proof]
Let $l\in\mathbb{N}_{+}$. Notice that any element $\mathord{\succsim}\in \mathcal{L}_{l}$ can be written as $\mathord{\succsim}=(\mset{S_{l}},\dots,\mset{S_{1}})$ for some $S_{l},\dots,S_{1}\in\mathcal{X}$. For $x\in X$ and $\mathord{\succsim}=(\mset{S_{l}},\dots,\mset{S_{1}})$, let $x^{\succsim}:=(x^{\succsim}_{l},\dots,x^{\succsim}_{1})$, where 
\[x^{\succsim}_{k}:=\begin{cases}
1\quad\text{if $x\in S_{k}$},\\
0\quad\text{if $x\notin S_{k}$}.
\end{cases}
\]
Then, swapping the neighboring coalitions at $\succsim$ to make $x$ unanimously supported is equivalent to sorting the vector $x^{\succsim}$ in descending order by swapping the neighboring elements of the vectors. The total number of swaps required for the sorting problem is known as the \textit{inversion number} in algorithmic theory \citep[see, e.g.,][]{Dinneen2016}. In this case, it is expressed by the equation in the proposition. 
\end{proof}

Proposition~\ref{proposition:inversion_formula} shows that the closeness to unanimity for a linear order can be measured by the inversion number in algorithmic theory. Based on this observation, we propose several definitions of Borda rule as an SRS in section~\ref{subsection:four definitions of borda rule as an SRS}, and then proceed to an axiomatic analysis in Section~\ref{section:main result}.

\subsection{Four definitions of the Borda rule as an SRS}
\label{subsection:four definitions of borda rule as an SRS}
We consider four possible definitions of the Borda rule. 
Fix any $l\in\mathbb{N}_{+}$. For any $\mathord{\succsim}=(\Sigma_{l},\dots,\Sigma_{1})\in \mathcal{R}_{l}$, let 
\begin{eqnarray}
s^{B1}_{\succsim}(x)&=&\sum_{k\in [l]} \sum_{S\in \Sigma_k: x\in S} \Sigma_k(S)\cdot (k-1),\\
s^{B2}_{\succsim}(x)&=&\sum_{k\in [l]} \sum_{S\in \Sigma_k: x \in S} \Sigma_{k}(S)\cdot \sum_{k'<k} \lvert \Sigma_{k'} \rvert,\\
s^{B3}_{\succsim}(x)&=&\sum_{k\in [l]} \sum_{S\in \Sigma_k: x \in S} \Sigma_{k}(S)\cdot \left (\sum_{k'<k} \lvert \Sigma_{k'} \rvert-\sum_{k'>k} \lvert \Sigma_{k'} \rvert\right),\\
i_{\succsim}(x)&=&\sum_{k\in[l]}\sum_{S\in \Sigma_k: x \in S}\Sigma_{k}(S)\cdot \sum_{k'>k}\sum_{T\in \Sigma_{k'}: x \notin T}\Sigma_{k'}(T).
\end{eqnarray}

\begin{definition}
\label{definition:Borda_rules}
The Borda rule of the first type $R^{B1}$, the Borda rule of the second type $R^{B2}$, the Borda rule of the third type $R^{B3}$, and the Borda rule induced by the inversion number $R^{Bi}$ are defined as follows. For any $\mathord{\succsim}\in\mathcal{R}_{l}$ and $x,y\in X$, 
\begin{itemize}
\item $xR^{B1}_{\succsim}y\Leftrightarrow s^{B1}_{\succsim}(x)\geq s^{B1}_{\succsim}(y)$,
\item $xR^{B2}_{\succsim}y\Leftrightarrow s^{B2}_{\succsim}(x)\geq s^{B2}_{\succsim}(y)$,
\item $xR^{B3}_{\succsim}y\Leftrightarrow s^{B3}_{\succsim}(x)\geq s^{B3}_{\succsim}(y)$,
\item $xR^{Bi}_{\succsim}y\Leftrightarrow i_{\succsim}(x)\leq i_{\succsim}(y)$.
\end{itemize}
\end{definition}

Each of the four definitions evaluate each individual $x$ by their scores ($s^{B1}_{\succsim}(x)$, $s^{B2}_{\succsim}(x)$, $s^{B3}_{\succsim}(x)$, and $i_{\succsim}(x)$). Notice that each of these four scores is defined in such a way that $\sum_{k\in[l]}\sum_{S\in m_{k}:x\in S}\cdots$. This means that each of the four counts the score of $x$ by adding the summands (i.e., the ``$\cdots$'' part) over all coalitions $S$ that include $x$; their differences are the definitions of the summands, which are interpreted as the scores assigned to the \textit{coalitions}. In $s^{B1}$, the score of coalition $S$ is given by the number of equivalence classes ranked lower than $S$; in $s^{B2}$, it is the number of coalitions ranked lower than $S$; in $s^{B3}$, it is the number of coalitions ranked lower than $S$ minus the number of coalitions ranked higher than $S$; and $i$ is defined by applying the equation in Proposition~\ref{proposition:inversion_formula}.

\begin{example}
Let us illustrate them using the coalitional ranking in Example~\ref{example:journals}. We can determine the score as follows: $(s^{B1}_{\succsim}(\text{A}),s^{B1}_{\succsim}(\text{B}),s^{B1}_{\succsim}(\text{C}))=(5,7,3)$, $(s^{B2}_{\succsim}(\text{A}),s^{B2}_{\succsim}(\text{B}),s^{B2}_{\succsim}(\text{C}))=(19,26,12)$, $(s^{B3}_{\succsim}(\text{A}),s^{B3}_{\succsim}(\text{B}),s^{B3}_{\succsim}(\text{C}))=(7,4,-5)$, and $(i_{\succsim}(\text{A}),i_{\succsim}(\text{B}),i_{\succsim}(\text{C}))=(3,3,8)$. Hence, we have $\text{B}P^{B1}_{\succsim}\text{A}P^{B1}_{\succsim}\text{C}$, $BP^{B2}_{\succsim}\text{A}P^{B2}_{\succsim}\text{C}$, $\text{A}P^{B3}_{\succsim}\text{B}P^{B3}_{\succsim}\text{C}$, and $\text{A}I^{Bi}_{\succsim}\text{B}P^{Bi}_{\succsim}\text{C}$. 
\end{example}

In the above example, the social ranking between A and B changes among the four (B is judged better than A in $R^{B1}$ and $R^{B2}$; A is judged better than B in $R^{B3}$; they are indifferent in $R^{Bi}$), but the social ranking between A and C remains the same (A is judged better in all $R^{B1},R^{B2},R^{B3}$, and $R^{Bi}$). In fact, this result arises because $\mathord{\succsim}\notin\mathcal{R}^{\text{A},\text{B}}_{3}$ but $\mathord{\succsim}\in\mathcal{R}^{\text{A},\text{C}}_{3}$. The following theorem formalizes this.

\begin{theorem}
\label{theorem: identity of the four}
Let $x,y\in X$. For any $\mathord{\succsim}=(S_{l},\dots,S_{1})\in \mathcal{L}_{l}^{x,y}$, each of the following is logically equivalent: 
\begin{enumerate}
\item $s^{B1}_{\succsim}(x)\geq s^{B1}_{\succsim}(y)$,
\item $s^{B2}_{\succsim}(x)\geq s^{B2}_{\succsim}(y)$,
\item $s^{B3}_{\succsim}(x)\geq s^{B3}_{\succsim}(y)$,
\item $i_{\succsim}(x)\leq i_{\succsim}(y)$.
\end{enumerate}
\end{theorem}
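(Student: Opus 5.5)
In a linear coalitional ranking $\mathord{\succsim}=(\mset{S_l},\dots,\mset{S_1})\in\mathcal{L}_l^{x,y}$, every class has cardinality one. The plan is to show that each of the three scores $s^{B2}_{\succsim}$, $s^{B3}_{\succsim}$ and $-i_{\succsim}$ is an affine transformation of $s^{B1}_{\succsim}$. Each transformation has the form $a\,s^{B1}_{\succsim}(z)+b_{\succsim}(n_z)$ with $a>0$, where $n_z:=\lvert\{k\in[l]\mid z\in S_k\}\rvert$ is the number of coalitions containing $z$. Since $\mathord{\succsim}\in\mathcal{L}^{x,y}_l$ means $n_x=n_y=:n$, the additive term is the same for $x$ and $y$ and cancels in every comparison. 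This gives all four equivalences at once.

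\textbf{Scores $s^{B2}$ and $s^{B3}$.} Because $\lvert\Sigma_{k'}\rvert=1$ for all $k'$, we have $\sum_{k'<k}\lvert\Sigma_{k'}\rvert=k-1$. Hence the coalition score in $s^{B2}$ equals that in $s^{B1}$, so $s^{B2}_{\succsim}(z)=s^{B1}_{\succsim}(z)$ for every $z$. For $s^{B3}$, the summand becomes $(k-1)-(l-k)=2(k-1)-(l-1)$. Summing over the $n_z$ positions $k$ with $z\in S_k$ gives
\[
s^{B3}_{\succsim}(z)=2s^{B1}_{\succsim}(z)-(l-1)n_z .
\]
With $n_x=n_y$, this yields $1\Leftrightarrow2\Leftrightarrow3$.

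\textbf{Inversion score $i$.} For each $k$ with $z\in S_k$, the inner sum $\sum_{k'>k}\sum_{T\in\Sigma_{k'}:z\notin T}\Sigma_{k'}(T)$ counts the positions $k'>k$ with $z\notin S_{k'}$. This equals $(l-k)$ minus the number of positions $k'>k$ with $z\in S_{k'}$. Summing over the positions $k_1<\dots<k_{n_z}$ containing $z$, the subtracted terms total $\binom{n_z}{2}$, since each unordered pair of $z$-positions is counted once. Therefore
\[
i_{\succsim}(z)=\sum_{j}(l-k_j)-\binom{n_z}{2}=n_z(l-1)-s^{B1}_{\succsim}(z)-\binom{n_z}{2}.
\]
So $i_{\succsim}(x)\le i_{\succsim}(y)$ iff $s^{B1}_{\succsim}(x)\ge s^{B1}_{\succsim}(y)$ when $n_x=n_y$, which gives $1\Leftrightarrow4$. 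Alternatively, one can invoke Proposition~\ref{proposition:inversion_formula}: the inversion count of the 0/1 vector $x^{\succsim}$ is the sum of displacements of its ones from the top $n$ positions. This is $\sum_j\bigl((l-j+1)-k_j\bigr)$ after ordering the positions decreasingly, which is the same identity.

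\textbf{Main obstacle.} The only nontrivial step is the $\binom{n_z}{2}$ count in the computation of $i$. I will verify it by observing that, for the $z$-positions listed in increasing order $k_1<\dots<k_{n}$, the number of $z$-positions above $k_j$ is $n-j$, and $\sum_j(n-j)=\binom{n}{2}$. Everything else is bookkeeping. The hypothesis $n_x=n_y$ is used exactly to cancel the terms $(l-1)n$ and $\binom{n}{2}$.
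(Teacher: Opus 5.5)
Your proof is correct and takes essentially the same route as the paper. Both arguments write each score explicitly in terms of the positions of the coalitions containing $z$ and cancel the terms that depend only on the common count $n_x=n_y$; your $\binom{n}{2}$ is exactly the paper's $\sum_t (p-t)$. Your count $l-k$ of higher classes in $s^{B3}$ is the accurate one (the paper writes $l-i_t-1$), but the discrepancy is a constant per coalition and does not affect the comparison.
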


\begin{proof}
$1\Leftrightarrow 2$ is straightforward. Fix any $\mathord{\succsim}=(S_{l},\dots,S_{1})\in \mathcal{L}_{l}^{x,y}$. Among $S_{l},\dots,S_{1}$, let $S_{i_{p}},\dots,S_{i_{1}}$ ($i_{p}>\dots>i_{1}$) be the sets including $x$, and let $S_{j_{q}},\dots,S_{j_{1}}$ ($j_{q}>\dots>j_{1}$) be the sets including $y$. By the symmetry condition, we have $p=q$. 

By the definitions of $s^{B1}$, $s^{B2}$, $s^{B3}$, $i$, we have: 
\begin{align*}
&s^{B1}_{\succsim}(x)=s^{B2}_{\succsim}(x)=\sum_{t=1}^{p}(i_{t}-1),\\
&s^{B1}_{\succsim}(y)=s^{B2}_{\succsim}(y)=\sum_{t=1}^{p}(j_{t}-1),\\
&s^{B3}_{\succsim}(x)=\sum_{t=1}^{p}\left[(i_{t}-1)-(l-i_{t}-1)\right],\\
&s^{B3}_{\succsim}(y)=\sum_{t=1}^{p}\left[(j_{t}-1)-(l-j_{t}-1)\right],\\
&i_{\succsim}(x)=\sum_{t=1}^{p}\left[(l-i_{t})-(p-t)\right],\\
&i_{\succsim}(y)=\sum_{t=1}^{p}\left[(l-j_{t})-(p-t)\right].
\end{align*}
Hence, 
\begin{align*}
s^{B1}_{\succsim}(x)\geq s^{B1}_{\succsim}(y)&\Leftrightarrow \sum_{t=1}^{p}(i_{t}-1)\geq \sum_{t=~1}^{p}(j_{t}-1)\\
&\Leftrightarrow \sum_{t=1}^{p}i_{t}\geq \sum_{t=1}^{p}j_{t}\\
&\Leftrightarrow \sum_{t=1}^{p}(2i_{t}-l)\geq \sum_{t=1}^{p}(2 j_{t}-l)\\
&\Leftrightarrow s^{B3}_{\succsim}(x)\geq s^{B3}_{\succsim}(y).
\end{align*}

Furthermore, 
\begin{align*}
i_{\succsim}(x)\geq i_{\succsim}(y)&\Leftrightarrow \sum_{t=1}^{p}(-i_{t}+l-p+t)\geq \sum_{t=1}^{p}(-j_{t}+l-p+t)\\
&\Leftrightarrow -\sum_{t=1}^{p}i_{t}\geq -\sum_{t=1}^{p}j_{t}\\
&\Leftrightarrow -s^{B1}_{\succsim}(x)\geq -s^{B1}_{\succsim}(y)\\
&\Leftrightarrow s^{B1}_{\succsim}(x)\leq s^{B1}_{\succsim}(y).
\end{align*}

\end{proof}

Thus, if we consider a coalition order $\succsim$ in $\mathcal{L}^{x,y}$ (i.e., a linear order such that the same number of coalitions include $x$ and $y$, respectively), then there is no difference between the possible definitions $R^{B1}$, $R^{B2}$, $R^{B3}$, and $R^{Bi}$. This fact leads us to ask which of these definitions are well performing in the general domain of $\mathcal{R}$, which is the main theme of the next section. 

\begin{proposition}
\label{proposition:Borda_rules_and_ECON}
Among $R^{B1}$, $R^{B2}$, $R^{B3}$, and $R^{Bi}$, only $R^{B1}$ satisfies ECON.
\end{proposition}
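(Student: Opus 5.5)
The proof splits into two parts: a positive part showing that $R^{B1}$ satisfies ECON, and a negative part exhibiting counterexamples for $R^{B2}$, $R^{B3}$ and $R^{Bi}$.

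\emph{$R^{B1}$ satisfies ECON.} The key observation is that $s^{B1}$ is additive under class-wise merging of coalitional rankings with the same number of classes. For $\mathord{\succsim},\mathord{\succsim}'\in\mathcal{R}_{l}$, the coalition $S$ sitting in the $k$-th class of $\mathord{\succsim}+\mathord{\succsim}'$ keeps the weight $k-1$. This weight depends only on the index $k$ and on $l$, not on the contents of the classes. Since $(\Sigma_k+\Sigma'_k)(S)=\Sigma_k(S)+\Sigma'_k(S)$, we get $s^{B1}_{\succsim+\succsim'}(x)=s^{B1}_{\succsim}(x)+s^{B1}_{\succsim'}(x)$ for every $x$. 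Conditions 1 and 2 of ECON then follow at once: a sum of two nonnegative differences $s(x)-s(y)$ is nonnegative, and it is positive as soon as one of them is positive.

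\emph{Counterexamples for the other three.} Here the weights change under merging, because $s^{B2}$, $s^{B3}$ and $i$ all depend on the cardinalities $\lvert\Sigma_{k'}\rvert$, or on the number of coalitions not containing $x$, in other classes. I will construct small examples with $l=2$ and $X\supseteq\{x,y\}$. The strategy is to take $\mathord{\succsim}$ with $x I y$, for which a symmetric configuration is convenient, and $\mathord{\succsim}'$ with $x P y$, or with $x I y$. I then choose $\mathord{\succsim}$ so that it carries many coalitions in the bottom class. This inflates the weights of the top class in the merged ranking and flips the comparison in favour of $y$.

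For $R^{B2}$, one candidate is the following. Let $\mathord{\succsim}'=(\mset{\{x\}},\mset{\{y\}})$, which gives $s^{B2}(x)=1$, $s^{B2}(y)=0$, so $xP'y$. Let $\mathord{\succsim}=(\mset{\{y\}},\mset{\{x\},\{x\},\dots})$, choosing the multiplicities so that $x$ and $y$ tie. The tie requires $s^{B2}(y)=\lvert\Sigma_1\rvert$ and $s^{B2}(x)=0$, which is impossible. So I will adjust, for instance by placing extra coalitions containing neither $x$ nor $y$, such as $\emptyset$ or $\{z\}$, in the bottom class. These raise the weight of the top class without affecting the scores of $x$ and $y$ directly. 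I will then add a pair of coalitions that balances the tie in $\mathord{\succsim}$ but gets reweighted in the merge.

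For $R^{Bi}$, the weight of $x$'s coalitions counts higher coalitions not containing $x$. A simple route is to let $\mathord{\succsim}$ put $\{y\}$ and $\{x\}$ in symmetric positions with an extra class content, and then to check numerically that merging reverses strict preference. $R^{B3}$ is handled the same way, since its weight subtracts $\lvert\Sigma_{k'}\rvert$ for the higher classes.

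The main obstacle is finding explicit, small counterexamples where one side is a tie or a weak preference and the merge yields a strict reversal. I will search systematically over $l=2$ with coalitions among $\{x\},\{y\},\{x,y\},\{z\}$ and small multiplicities. I will write the score difference of the merged ranking as the sum of the two individual differences plus cross terms, of the form $\lvert\Sigma'_{1}\rvert$ times the multiplicity of top-class coalitions containing exactly one of $x,y$. I will then pick the multiplicities so that the cross terms dominate with the wrong sign.
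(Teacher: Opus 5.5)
Your positive part is correct and matches the paper: on $\mathcal{R}_l$ the weight $k-1$ does not depend on the class contents, so $s^{B1}_{\succsim+\succsim'}=s^{B1}_{\succsim}+s^{B1}_{\succsim'}$, and both ECON clauses follow. The gap is in the negative half, which is the real content of the proposition. You do not exhibit a single counterexample. Moreover, the search space you commit to ($l=2$) provably contains none for $R^{B2}$. For $l=2$, every top-class coalition receives $B2$-weight $\lvert\Sigma_1\rvert$ and every bottom-class coalition receives weight $0$. Hence $s^{B2}_{\succsim}=\lvert\Sigma_1\rvert\, s^{B1}_{\succsim}$ on $\mathcal{R}_2$, so $R^{B2}$ coincides with $R^{B1}$ there. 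Since merging two elements of $\mathcal{R}_2$ stays in $\mathcal{R}_2$, $R^{B2}$ satisfies ECON on that domain. Your filler device fails for the same reason: adding coalitions that avoid $x,y$ to the bottom class rescales all top-class coalitions by the same factor, so it can never break a tie between $x$ and $y$. Concretely, the merged difference is $(\lvert\Sigma_1\rvert+\lvert\Sigma'_1\rvert)(d+d')$, where $d,d'$ are the top-class count differences, so the ``cross terms'' never take the wrong sign.

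For $R^{B2}$ you need at least three classes. Then fillers placed in an intermediate class raise the weight of the top class but not the weight of that intermediate class. For example, $\mathord{\succsim}=(\mset{\{x\}},\mset{\{y\},\{y\}},\mset{\{z\},\{z\}})$ gives $s^{B2}(x)=s^{B2}(y)=4$, and $\mathord{\succsim}'=(\mset{\{z\}},\mset{\{z\},\{z\}},\mset{\{z\}})$ gives a $0$--$0$ tie. In the merge, however, $s^{B2}(x)=7>6=s^{B2}(y)$. The paper uses a four-class example of the same type. For $R^{B3}$ and $R^{Bi}$ your $l=2$ search would succeed, but you still have to write the examples down. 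For instance, $\mathord{\succsim}=(\mset{\{x\}},\mset{\{x\}})$ ties $x$ and $y$ under both rules. Merging it with $(\mset{\{z\}},\mset{\{z\},\{z\}})$ gives $s^{B3}(x)=1>0=s^{B3}(y)$. Merging it with $(\mset{\{z\}},\mset{\{z\}})$ gives $i(x)=1>0=i(y)$, that is, $yP^{Bi}x$. Either way, ECON clause 1 is violated. The paper's examples for these two rules use three classes, but the mechanism is the same.
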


\begin{proof}
The fact that $R^{B1}$ satisfies ECON is straightforward. We will prove that none of $R^{B2}$, $R^{B3}$, and $R^{Bi}$ satisfy ECON by presenting counterexamples. 

For $R^{B2}$, let $\mathord{\succsim}=(\mset{\{x\}},\mset{\{y\}},\mset{\{y\}},\mset{\{z\}})$ and $\mathord{\succsim}'=(\mset{\{z\}},\mset{\{z\}},\mset{\{z\}},\mset{\{z\},\{z\}})$.
Then, we can verify that $xI^{B2}_{\succsim}y$, $xI^{B2}_{\succsim'}y$, and $yP^{B2}_{\succsim+\succsim'}x$. This contradicts ECON.

For $R^{B3}$, let $\mathord{\succsim}=(\mset{\{x\}},\mset{\{y\},\{y\},\{y\}},\mset{\{z\},\{z\},\{z\}})$ and $\mathord{\succsim}'=(\mset{\{z\}},\mset{\{z\}},\mset{\{z\}})$. Then, we have $xI^{B3}_{\succsim}y$, $xI^{B3}_{\succsim'}y$, and $xP^{B3}_{\succsim+\succsim'}y$. This contradicts ECON. 

For $R^{Bi}$, let $\mathord{\succsim}=(\mset{\{x\}},\mset{\{y\},\{y\}},\mset{\{x\}})$ and $\mathord{\succsim}'=(\mset{\{z\},\{z\}},\mset{\{z\}},\mset{\{z\}})$. Then, we have $xI^{Bi}_{\succsim}y$, $xI^{Bi}_{\succsim}y$, and $xP^{Bi}_{\succsim}y$. This contradicts ECON. 
\end{proof}

Finally, we consider the relationship between our Borda rules and the Banzhaf value. For any coalitional ranking $\mathord{\succsim} \in \mathcal{R}_l$, define the following values for coalitions $S \in \mathcal{X}$:
\begin{eqnarray}
v^{B1}_{\succsim}(S)&=&\sum_{k\in [l]}(k-1)\cdot  \Sigma_k(S),\\
v^{B2}_{\succsim}(S)&=&\sum_{k\in [l]}  \Sigma_{k}(S)\cdot \sum_{k'<k} \lvert \Sigma_{k'} \rvert,\\
v^{B3}_{\succsim}(S)&=&\sum_{k\in [l]}  \Sigma_{k}(S)\cdot \left (\sum_{k'<k} \lvert \Sigma_{k'} \rvert-\sum_{k'>k} \lvert \Sigma_{k'} \rvert\right).
\end{eqnarray}
Notice that for all $x\in X$,
\begin{eqnarray}
s^{B1}_{\succsim}(x)&=&\sum_{k\in [l]}(k-1)\cdot \sum_{S\in \Sigma_k: x\in S} \Sigma_k(S)\nonumber \\
&=&\sum_{S\in \mathcal{X}: x\in S} \sum_{k\in [l]}(k-1)\cdot  \Sigma_k(S)\nonumber \\
&=&\sum_{S\in \mathcal{X}: x\in S} v^{B1}_\succsim(S).\nonumber
\end{eqnarray}
Similarly, one can check that for all $x\in X$,
$$s^{B2}_{\succsim}(x)=\sum_{S\in \mathcal{X}: x\in S} v^{B2}_\succsim(S) \ \ \mbox{ and } \ \ s^{B3}_{\succsim}(x)=\sum_{S\in \mathcal{X}: x\in S} v^{B3}_\succsim(S).$$\\
Moreover, the pair $(X,v^{Bt}_{\succsim})$, for each $t\in \{1,2,3\}$, can be interpreted as an $|X|$-person game in characteristic function form \citep{owen2013game}. Therefore, a classical power index, the Banzhaf value \citep{banzhaf1964weighted}, can be computed on the game $(X,v^{Bt}_{\succsim})$ for any $t\in \{1,2,3\}$ and $x \in X$, as follows:
\[
\beta_x(v^{Bt}_{\succsim})=\sum_{T\subseteq X\backslash\{x\}}\frac{1}{2^{|X|-1}}(v^{Bt}_{\succsim}(T\cup\{i\})-v^{Bt}_{\succsim}(T)).
\]
Then, we can use the Banzhaf value to define an SRS $R^{\beta t}$ for any $t\in \{1,2,3\}$ as follows. For any $\mathord{\succsim} \in \mathcal{R}_l$ and $x,y\in X$, define $R^{\beta t}$ such that
\[xR^{\beta t}_{\succsim}y\iff \beta_x(v^{Bt}_{\succsim})\geq \beta_y(v^{Bt}_{\succsim}).
\]
 In the following proposition,  we show that each ranking provided by the Banzhaf value on each characteristic function $v^{Bt}$, for $t\in \{1,2,3\}$,  equals the ranking provided by the corresponding  solution $R^{Bt}$.
\begin{proposition}
\label{proposition:banzhaf}
For any $\mathord{\succsim}=(S_{l},\dots,S_{1})\in \mathcal{R}_{l}$,
it holds that 
\[
R^{\beta t}=R^{Bt}
\]
for any $t\in \{1,2,3\}$.
\end{proposition}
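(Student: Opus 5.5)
The plan is to compute the Banzhaf value of $(X,v^{Bt}_{\succsim})$ directly and show that $\beta_x(v^{Bt}_{\succsim})$ is a positive affine transformation of $s^{Bt}_{\succsim}(x)$, with the same constants for every $x$. Since $R^{Bt}$ compares individuals by $s^{Bt}_{\succsim}$, this immediately gives $R^{\beta t}_{\succsim}=R^{Bt}_{\succsim}$. Throughout, I read the summand $v^{Bt}_{\succsim}(T\cup\{i\})$ in the displayed formula as $v^{Bt}_{\succsim}(T\cup\{x\})$.

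First I split the Banzhaf sum into two parts:
\[
2^{|X|-1}\beta_x(v)=\sum_{T\subseteq X\setminus\{x\}}v(T\cup\{x\})-\sum_{T\subseteq X\setminus\{x\}}v(T).
\]
The map $T\mapsto T\cup\{x\}$ is a bijection from subsets of $X\setminus\{x\}$ onto the coalitions containing $x$. Hence the first sum equals $\sum_{S\in\mathcal{X}:x\in S}v(S)$, which by the identity noted just before the proposition is exactly $s^{Bt}_{\succsim}(x)$. The second sum runs over all coalitions not containing $x$. I rewrite it as $\sum_{S\in\mathcal{X}}v(S)-\sum_{S\ni x}v(S)=V-s^{Bt}_{\succsim}(x)$, where $V:=\sum_{S\in\mathcal{X}}v^{Bt}_{\succsim}(S)$ does not depend on $x$.

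This gives
\[
\beta_x(v^{Bt}_{\succsim})=\frac{2s^{Bt}_{\succsim}(x)-V}{2^{|X|-1}}.
\]
Consequently, $\beta_x\geq\beta_y$ holds if and only if $s^{Bt}_{\succsim}(x)\geq s^{Bt}_{\succsim}(y)$, and this equivalence holds for each $t\in\{1,2,3\}$ and for every $\succsim\in\mathcal{R}_l$.

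There is essentially no obstacle. The only care needed is the bookkeeping in the second sum: it must range over all $T\subseteq X\setminus\{x\}$, including $\emptyset$, and $V$ must include $v(\emptyset)$. Both conventions apply uniformly to every $x$, so they cancel in the comparison. Nothing here depends on the particular form of the coalition scores; only the identity $s^{Bt}_{\succsim}(x)=\sum_{S\ni x}v^{Bt}_{\succsim}(S)$ is used.
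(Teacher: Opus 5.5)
Your argument is correct, but it uses a different decomposition from the paper's.

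The paper never computes $\beta_x$ on its own. It works with the difference $\beta_x(v^{Bt}_{\succsim})-\beta_y(v^{Bt}_{\succsim})$ and splits each Banzhaf sum according to whether $T$ contains the other player. It then cancels the terms $v(T)$ and $v(T\cup\{x,y\})$ for $T\subseteq X\setminus\{x,y\}$, which leaves $2^{-(|X|-2)}\sum_{T\subseteq X\setminus\{x,y\}}\bigl(v(T\cup\{x\})-v(T\cup\{y\})\bigr)$. Finally it identifies this with $2^{-(|X|-2)}\bigl(s^{Bt}_{\succsim}(x)-s^{Bt}_{\succsim}(y)\bigr)$, implicitly using that coalitions containing both $x$ and $y$ contribute equally to both scores.

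You instead use complementation to get a closed form for each individual, $\beta_x=(2s^{Bt}_{\succsim}(x)-V)/2^{|X|-1}$. This route is shorter and avoids both the four-term pairing and the implicit cancellation step. It also gives a slightly stronger conclusion: the Banzhaf vector is a positive affine image of the score vector, with constants common to all individuals. The paper's difference identity then follows by subtraction.

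The paper's two-player route, in turn, makes explicit that only coalitions separating $x$ and $y$ matter, and it never needs the global constant $V$. Your remark that $V$ must include $v^{Bt}_{\succsim}(\emptyset)$ is exactly the bookkeeping point the pairing approach sidesteps, since $v(T)$ terms cancel there automatically. That term can be nonzero here because $\emptyset$ may occur in a class.
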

\begin{proof}
For any $x,y \in X$ and $t\in \{1,2,3\}$, we have
\begin{eqnarray}
\label{i vs j}
&&\beta_x(v^{Bt}_{\succsim})-\beta_y(v^{Bt}_{\succsim})\nonumber \\
&=&\frac{1}{2^{|X|-1}}\sum_{T\subseteq X\backslash\{x\}}(v^{Bt}_{\succsim}(T\cup\{x\})
-v^{Bt}_{\succsim}(T))\nonumber \\
&-&\frac{1}{2^{|X|-1}}\sum_{T\subseteq X\backslash\{y\}}(v^{Bt}_{\succsim}(T\cup\{y\})-v^{Bt}_{\succsim}(T))\nonumber \\
&=&\frac{1}{2^{|X|-1}}\sum_{T\subseteq X\backslash\{x,y\}}\Big(\big((v^{Bt}_{\succsim}(T\cup\{x\})-v^{Bt}(T))+(v^{Bt}_{\succsim}(T\cup\{x,y\})-v^{Bt}(T\cup\{y\}))\big)\nonumber \\
&-&\big( (v^{Bt}_{\succsim}(T\cup\{y\})-v^{Bt}_{\succsim}(T))+(v^{Bt}_{\succsim}(T\cup\{x,y\})-v^{Bt}_{\succsim}(T\cup\{x\}))\big) \Big)\nonumber \\
&=&\frac{1}{2^{|X|-2}}\sum_{T\subseteq X\backslash\{x,y\}}(v^{Bt}_{\succsim}(T\cup\{x\})-v^{Bt}_{\succsim}(T\cup\{y\}))\nonumber \\
&=&\frac{1}{2^{|X|-2}}(s^{Bt}_{\succsim}(x)-s^{Bt}_{\succsim}(y)),\nonumber
\end{eqnarray}
where the last equality follows directly from the fact that for any $x \in X$ and $t\in \{1,2,3\}$,
\begin{eqnarray}
s^{Bt}_{\succsim}(x)&=&\sum_{S\in \mathcal{X}: x\in S}v^{Bt}_{\succsim}(S).\nonumber
\end{eqnarray}
\end{proof}
It is well-known that the Banzhaf value $\beta$ satisfies the property of strong monotonicity, which states that $\beta_x(v)\geq\beta_x(w)$ for all games $(X,v)$ and $(X,w)$ and elements $x \in X$ such that $v(S\cup \{x\})-v(S)\geq w(S\cup \{x\})-v(S)$
for all $S\in \mathcal{X}$ with $x \notin S$. So, in view of Theorem~\ref{theorem: identity of the four} specifying that considering a coalition order $\succsim$ in $\mathcal{L}^{x,y}$, solutions $R^{B1}$, $R^{B2}$, and $R^{B3}$ provide the same ranking, the Banzhaf value on games $v^{Bt}_{\succsim}$, for $t\in \{1,2,3\}$, could be a good candidate index to assess the relative intensities of the position of each element in the common ranking according to the alternative scores $B1$, $B2$, and $B3$. However, this matter is beyond the scope of this study.

\section{Main results}
\label{section:main result}

\subsection{Characterization}

As discussed in the previous section, for coalitional rankings that are linear and where the number of coalitions including two elements $x$ and $y$ is exactly the same, it seems reasonable to rank $x$ and $y$ according to the inversion number (the lower the inversion number, the higher the rank).

The results in the previous section established a close relationship between the closeness to unanimity, one of the fundamental properties of the Borda rule, and the inversion number.  We then introduced four possible definitions of the Borda rule (Definition~\ref{definition:Borda_rules}), each of which is equivalent to the inversion number on linear and symmetric domains (Theorem~\ref{theorem: identity of the four}). However, we also proved that only the first one, $R^{B1}$, satisfies ECON (Proposition~\ref{proposition:Borda_rules_and_ECON}). The purpose of this section is to specify a reasonable definition of the Borda SRS through the two fundamental conditions---the closeness to unanimity and ECON---together with complementary axioms. Here, we state a formal definition regarding the former. 

\begin{definition}
We say that an SRS satisfies \textit{closeness to unanimity} (CU) if, for any $l\in\mathbb{N}_{+}$ $x,y\in X$, and $\mathord{\succsim}\in\mathcal{L}^{x,y}_{l}$, we have $R_{\succsim}\mid_{\{x,y\}}=R^{Bi}_{\succsim}\mid_{\{x,y\}}$. 
\end{definition}

CU demands that the SRS coincides with the inversion number under linear and symmetric domains and hence, it must be the ``closest to unanimity'' as shown in Section~\ref{subsection:Borda rule and the inversion number}. Notice that Theorem~\ref{theorem: identity of the four} implies that $R^{B1}$, $R^{B2}$, $R^{B3}$, and $R^{Bi}$ all satisfy CU. Thus, the purpose of this section can be rephrased as finding a reasonable definition of the Borda SRS that satisfies CU and ECON. Our next theorem answers this question.

\begin{theorem}
\label{theorem: characterization of B1}
An SRS $R$ satisfies ECON, DCONT, NT, IPP, and CU if and only if $R=R^{B1}$. 
\end{theorem}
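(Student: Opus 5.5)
The plan is to prove necessity by direct verification and sufficiency by a reduction-and-decomposition argument, pinning down $R_{\succsim}\mid_{\{x,y\}}$ through the sign of $d_{\succsim}:=s^{B1}_{\succsim}(x)-s^{B1}_{\succsim}(y)$.

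\emph{Necessity.} $R^{B1}$ satisfies ECON by Proposition~\ref{proposition:Borda_rules_and_ECON}. It satisfies NT because $s^{B1}$ depends only on class indices. It satisfies IPP because a coalition added to the worst class contributes $(1-1)\cdot 1=0$ to every score. It satisfies CU by Theorem~\ref{theorem: identity of the four}. For DCONT, note that moving a single coalition $S$ from class $k_1$ to class $k_1+1$ changes $d_{\succsim}$ by $+1$, $0$ or $-1$: it is $+1$ if $x\in S\not\ni y$, $-1$ if $y\in S\not\ni x$, and $0$ otherwise. Since $d$ is an integer, it cannot jump from $\geq 1$ to $\leq -1$.

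\emph{Sufficiency, step 1 (symmetric domain).} Fix $x,y$ and $\succsim\in\mathcal{R}_l$. Repeatedly add $\{x\}$ or $\{y\}$ to the worst class; by IPP this leaves $R_{\succsim}\mid_{\{x,y\}}$ unchanged, and it also leaves $d_{\succsim}$ unchanged. We may therefore assume $\succsim\in\mathcal{R}^{x,y}_l$.

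\emph{Sufficiency, step 2 (neutral summands).} Call $F\in\mathcal{R}_l$ \emph{neutral} if $xI_Fy$. I would first prove the following claim: if $F$ is neutral, then $R_{\succsim+F}\mid_{\{x,y\}}=R_{\succsim}\mid_{\{x,y\}}$. The proof runs through the three cases for $\succsim$ using both clauses of ECON. If $xP_{\succsim}y$, then $xP_{\succsim+F}y$. The case $yP_{\succsim}x$ is symmetric. If $xI_{\succsim}y$, then both $xR_{\succsim+F}y$ and $yR_{\succsim+F}x$ hold.

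By CU, the following rankings are neutral:
\begin{itemize}
\item rankings in $\mathcal{L}_l$ built only from coalitions containing neither $x$ nor $y$ (e.g.\ $\emptyset$ or $\{z\}$);
\item such filler rankings with one class replaced by a coalition containing both $x$ and $y$;
\item such filler rankings with one class holding a coalition containing $x$ and not $y$, and the same class holding a coalition containing $y$ and not $x$ (a $2$-element class can be split via neutral pieces as well).
\end{itemize}
Using these, I would strip from $\succsim$ every coalition containing both or neither of $x,y$, replacing it with filler so that all classes stay nonempty. What remains is essentially a list of $p$ ``$x$-positions'' $a_1,\dots,a_p$ and $p$ ``$y$-positions'' $b_1,\dots,b_p$, with $d=\sum_t a_t-\sum_t b_t$. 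Adding fillers does not change $d$. Moreover, by the claim, adding fillers lets me realize $\succsim$, up to neutral summands, as a sum of rankings in $\mathcal{L}^{x,y}_l$ whenever convenient.

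\emph{Sufficiency, step 3 (monotonicity, the main obstacle).} It remains to show that $xI_{\succsim}y$ when $d=0$, and $xP_{\succsim}y$ when $d>0$.
\begin{itemize}
\item For $d>0$: I would write $\succsim$, up to neutral summands, as a $d=0$ ranking plus $d$ one-step components, each of the form ``$x$ at class $a+1$, $y$ at class $a$, filler elsewhere''. By CU each such component gives $xPy$. ECON clause~2 then yields $xP_{\succsim}y$, provided the $d=0$ part is indifferent.
\item For $d=0$: when the pairing $(a_t,b_t)$ has mixed signs, ECON alone gives nothing. I would argue by induction on $\sum_t|a_t-b_t|$. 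Pick $t$ with $a_t>b_t$ and $u$ with $a_u<b_u$, shift the $x$-coalition of $t$ down one class and the $x$-coalition of $u$ up one class, and use DCONT (in its DCONT* form) along this chain of one-class shifts. DCONT* says a strict reversal between neighboring rankings must pass through an indifference, while the already-established strict cases for $d=\pm1$ pin the endpoints.
\end{itemize}
I expect this combination of DCONT with ECON to rule out non-indifference at $d=0$ to be the delicate part. If it resists, I would first try to reduce instead to a single linear ranking in $\mathcal{L}^{x,y}_l$ where CU applies directly, by spreading the $x$- and $y$-coalitions into distinct classes via neutral filler summands.
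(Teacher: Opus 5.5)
\textbf{There is a genuine gap, and it sits exactly where DCONT has to do its work.} Your necessity argument is fine. The problem is the third bullet of your Step~2, which asserts ``by CU'' that a filler ranking with an $x$-only coalition $S$ and a $y$-only coalition $T$ in the same class is neutral. That ranking has a class of size two, so it is not in $\mathcal{L}_l$, and CU says nothing about it. ECON only passes relations from summands to their sum, so it cannot build this neutrality out of linear CU-certified pieces.

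In fact the claim does not follow from ECON, NT, IPP and CU at all. Take the SRS that, for $l=2$, compares $\phi_{\succsim}(x)=\sum_{S\ni x}m_{2}(S)\,\lvert S\rvert$ and coincides with $R^{B1}$ for $l\neq 2$. It satisfies those four axioms: on $\mathcal{L}^{x,y}_2$ the only discriminating profiles are an $x$-only coalition above a $y$-only one or the reverse. Yet it gives $xPy$ at $(\mset{\{x,z\},\{y\}},\mset{\{z\}})$, where $s^{B1}(x)=s^{B1}(y)$. This same-class configuration is precisely the base case of your induction on $\sum_t|a_t-b_t|$ (all $a_t=b_t$). It is also the neutral summand you need to realize your $d>0$ decomposition into one-step components.

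The paper obtains it in Step~1 of its pair-cancellation lemma. Pad with $1_X$ and compare with the linear symmetric ranking in which $T$ sits one class lower; CU gives $xPy$ there. DCONT then forbids $yPx$ at the same-class configuration, and exchanging $x$ and $y$ forbids $xPy$. From this the paper derives CAN and MON. It then pushes equal-score pairs down to the worst class by CAN, deletes them by IPP, concludes $xIy$ by NT, and uses MON for $d>0$.

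\textbf{Your Step~3 is also circular as written.} The $d>0$ bullet presupposes indifference at $d=0$, while the $d=0$ bullet presupposes the ``already-established'' strict cases $d=\pm1$. That only works inside one simultaneous induction, which you do not set up. Your fallback cannot work either: adding summands never moves an existing coalition to another class. Moreover, $R^{B1:\text{NWL}}$ satisfies ECON, NT, IPP and CU but differs from $R^{B1}$, so no DCONT-free route exists.

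\textbf{A repair closer to your plan than the paper's.}
\begin{enumerate}
\item Get MON from ECON, NT and CU alone by pairing $S$ with its mirror $\sigma(S)$, where $\sigma=(x\;y)$. The component with $S$ in class $k+1$, $\sigma(S)$ in class $k$, and $X$ elsewhere lies in $\mathcal{L}^{x,y}_l$, and CU gives $xPy$ there. The leftover summand, with $S$ and $\sigma(S)$ in class $k$ (padded with $X$), is $\sigma$-invariant and hence neutral by NT.
\item Then induct on $\mu:=\sum(k-1)$ taken over all $(x,y)$-discriminating coalitions.
\begin{itemize}
\item If $\mu=0$, they all lie in the worst class. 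Add $\{x\}$ and $\{y\}$ there by IPP, delete the others by IPP, and NT gives $xIy$.
\item If $d>0$, lower an $x$-only coalition. The induction hypothesis gives $xRy$, and MON gives $xPy$. The case $d<0$ is symmetric.
\item If $d=0<\mu$, lowering an $x$-only or a $y$-only coalition gives neighbors with $d=\mp1$ and smaller $\mu$. These are strict by induction, and DCONT applied to each yields $yRx$ and $xRy$. Pad with $1_X$ so that no class empties.
\end{itemize}
\end{enumerate}
This bypasses the pair lemma and CAN altogether.
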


\begin{lemma}
\label{lemma: pair cancellation}
Let $R$ be an SRS that satisfies ECON, DCONT, NT, and CU. For any $l\in\mathbb{N}_{+}$, $x,y\in X$, $\mathord{\succsim}=(\Sigma_{l},\dots,\Sigma_{1})\in\mathcal{R}_{l}$, $S,T\in \mathcal{X}$ such that $x\in S\not\ni y$ and $y\in T\not\ni x$, and $k_{1}\in[l]$, we have $R_{\succsim+(1_{S})_{k_{1}}+(1_{T})_{k_{1}}}\mid_{\{x,y\}}=R_{\succsim}\mid_{\{x,y\}}$.
\end{lemma}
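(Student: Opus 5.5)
The plan is to show that the two-coalition ranking obtained by placing $S$ and $T$ together in class $k_{1}$ (padded with neutral filler) makes $x$ and $y$ indifferent. ECON then transports this indifference to $\succsim$.

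\emph{Reduction via ECON.} Let $F\in\mathcal{R}_{l}$ be a filler ranking whose classes contain only coalitions that include neither $x$ nor $y$; the natural choice is copies of $\emptyset$. Put $M:=F+(1_{S})_{k_{1}}+(1_{T})_{k_{1}}$. Suppose we know $xI_{M}y$, and also that adding $F$ to $\succsim$ does not change the comparison of $x$ and $y$. Since $\succsim+(1_{S})_{k_{1}}+(1_{T})_{k_{1}}$ equals $\succsim+M$ up to the filler, the two parts of ECON (applied in both orders of $x,y$) give the following. If $xP_{\succsim}y$ then $xP_{\succsim+M}y$; if $xI_{\succsim}y$ then $xI_{\succsim+M}y$; and symmetrically for $y$. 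Because $R$ is complete, this yields $R_{\succsim+M}\mid_{\{x,y\}}=R_{\succsim}\mid_{\{x,y\}}$. The filler issue is handled the same way: a ranking $F$ made only of $\emptyset$'s is invariant under a transposition $\sigma$ of $x$ and $y$. So NT gives $xI_{F}y$, and ECON lets us add or remove $F$ freely.

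\emph{Key step: $xI_{M}y$ via DCONT and CU.} Assume first $k_{1}<l$. Consider the two rankings
\[N_{1}:=F+(1_{S})_{k_{1}}+(1_{T})_{k_{1}+1},\qquad N_{2}:=F+(1_{T})_{k_{1}}+(1_{S})_{k_{1}+1}.\]
Each lies in $\mathcal{R}^{x,y}$, since $x$ and $y$ each occur in exactly one coalition. In $N_{1}$, $T$ sits above $S$, so the inversion count favours $y$; by CU (once we are in the linear domain) we get $yP_{N_{1}}x$. Likewise $xP_{N_{2}}y$.

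Now apply DCONT to the base $F+(1_{S})_{k_{1}}$, shifting $T$ from class $k_{1}$ to class $k_{1}+1$. If $xP_{M}y$, then DCONT forbids $yP_{N_{1}}x$, which is a contradiction; hence $\neg(xP_{M}y)$. Symmetrically, use the base $F+(1_{T})_{k_{1}}$ and shift $S$; this gives $\neg(yP_{M}x)$. Completeness then yields $xI_{M}y$. If $k_{1}=l$, do the same with the class $k_{1}-1$ in place of $k_{1}+1$, shifting downward instead.

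\emph{Expected obstacle.} The hardest part is reconciling the nonemptiness requirement of $\mathcal{R}_{l}$ with the linearity needed to invoke CU. $M$ needs every class nonempty, so $F$ must contribute something to class $k_{1}+1$. But then $N_{1}$ and $N_{2}$ are not in $\mathcal{L}_{l}$, and CU does not apply directly.

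My first attempt is to fill every class of $F$ other than $k_{1}$ and $k_{1}+1$ with a single $\emptyset$, so that $N_{1},N_{2}\in\mathcal{L}^{x,y}_{l}$. For $M$ itself, I would obtain class $k_{1}+1$'s filler by ECON-adding an $x,y$-neutral ranking for which $xI y$ holds by NT (the all-$\emptyset$ ranking, or one with symmetric images under the transposition of $x$ and $y$). If that fails, the fallback is to construct symmetric linear rankings $L,L'\in\mathcal{L}^{x,y}_{l}$ whose sum is $N_{1}$ plus neutral material, with $yP_{L}x$ and $yR_{L'}x$ checked by CU; ECON then gives $yP_{N_{1}}x$ directly.
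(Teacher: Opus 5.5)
Your proposal is correct and is essentially the paper's proof. The paper also first shows $xIy$ for a neutrally padded ranking with $S$ and $T$ both in class $k_{1}$: it applies CU to a linear symmetric ranking with $S$ and $T$ in adjacent classes, absorbs neutral padding via NT and ECON (your fallback decomposition is exactly its passage from $\succsim_{3}$ to $\succsim_{4}$), uses a single DCONT shift of $T$ (then of $S$ by symmetry), and finally transfers the indifference to $\succsim$ by ECON as in your reduction. The differences are cosmetic: you pad with $\emptyset$ instead of $X$ and shift up to class $k_{1}+1$ rather than down to $k_{1}-1$, which also covers $k_{1}=1$, a case the paper's write-up skips. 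Like the paper, your argument needs $l\geq 2$, since for $l=1$ there is no neighbouring class to shift into.
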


\begin{proof}
The proof is made up of two parts. 

Step~1. We will prove that $xI_{\succsim_{1}}y$, where $\succsim_{1}:=\left(\sum_{k\neq k_{1}}(1_{X})_{k}\right)+(1_{S})_{k_{1}}+(1_{T})_{k_{1}}$. 
Suppose to the contrary that $yP_{\succsim_{1}}x$. 
Let 
\[\mathord{\succsim}_{2}:=\mathord{\succsim}_{1}+\sum_{k\in[l]}(1_{X})_{k}.\]
Since $xI_{\sum_{k\in[l]}(1_{X})_{k}}y$ holds by NT, ECON implies that $R_{\succsim_{2}}\mid_{\{x,y\}}=R_{\succsim_{1}}\mid_{\{x,y\}}$. Hence, we obtain $yP_{\succsim_{2}}x$. 

Now, let
\[\mathord{\succsim}_{3}:=\left(\sum_{k\neq k_{1},k_{1}-1}(1_{X})_{k}\right)+(1_{S})_{k_{1}}+(1_{T})_{k_{1}-1}.\]
Note that $\mathord{\succsim}_{3}\in\mathcal{L}_{l}^{x,y}$. So, CU implies that $xP_{\succsim_{3}}y$. 
Let 
\[\mathord{\succsim}_{4}:=\mathord{\succsim}_{3}+\left(\sum_{k\in[l]}(1_{X})_{k}\right)+(1_{X})_{k_{1}-1}.\]
Since $xI_{\left(\sum_{k\in[l]}(1_{X})_{k}\right)+(1_{X})_{k_{1}-1}}y$ holds by NT, ECON implies that $R_{\succsim_{4}}\mid_{\{x,y\}}=R_{\succsim_{3}}\mid_{\{x,y\}}$. Therefore, we obtain $xP_{\succsim_{4}}y$. 

Notice that $\succsim_{2}$ is obtained from $\succsim_{4}$ by upgrading $\Gamma:=1_{T}$ by one class. Since $xP_{\succsim_{4}}y$, we have $\neg(yP_{\succsim_{2}}x)$ by DCONT. This contradicts $yP_{\succsim_{2}}x$. So, it follows that $\neg (yP_{\succsim_{1}}x)$. By the completeness of $R_{\succsim_{1}}$, we can say that $xR_{\succsim_{1}}y$. By swapping $x$ and $y$ in the arguments above, we can also say that $yR_{\succsim_{1}}x$. Hence, we can conclude that $xI_{\succsim_{1}}y$. 

Step~2. Proof of the lemma. Let $\mathord{\succsim}_{5}:=\left(\sum_{k\neq k_{1}}(1_{X})_{k}\right)+\left(\sum_{k\in[l]}(1_{X})_{k}\right)$. Then, NT implies that $xI_{\succsim_{5}}y$. Hence, ECON says that 
\[R_{\succsim+(1_{S})_{k_{1}}+(1_{T})_{k_{1}}}\mid_{\{x,y\}}=R_{\succsim+(1_{S})_{k_{1}}+(1_{T})_{k_{1}}+\succsim_{5}}\mid_{\{x,y\}}.\]
Since $xI_{\left(\sum_{k\neq k_{1}}(1_{X})_{k}\right)+(1_{S})_{k_{1}}+(1_{T})_{k_{1}}}y$ by Step~1, ECON implies that 
\[R_{\succsim+(1_{S})_{k_{1}}+(1_{T})_{k_{1}}+\succsim_{5}}\mid_{\{x,y\}}=R_{\succsim+\sum_{k\in[l]}(1_{X})_{k}}\mid_{\{x,y\}}.\]
Since $xI_{\sum_{k\in[l]}(1_{X})_{k}}y$ by NT, the right-hand side is equal to $R_{\succsim}\mid_{\{x,y\}}$ (by ECON). To summarize, we have shown that $R_{\succsim+(1_{S})_{k_{1}}+(1_{T})_{k_{1}}}\mid_{\{x,y\}}=R_{\succsim}\mid_{\{x,y\}}$. 
\end{proof}

\begin{definition}
An SRS $R$ satisfies \textit{cancellation} (CAN) if, for any $\mathord{\succsim}=(\Sigma_{l},\dots,\Sigma_{1})\in\mathcal{R}_{l}$, $x,y\in X$, $k_{1},k_{2}\in[l]$, and $S,T\in \mathcal{X}$ such that $x\in S\not\ni y$ and $y\in T\not\ni x$, we have that 
$R_{\succsim+(1_{S})_{k_{1}}+(1_{T})_{k_{2}}}=R_{\succsim+(1_{S})_{k_{1}-1}+(1_{T})_{k_{2}-1}}$.
\end{definition}

\begin{proposition}
If an SRS $R$ satisfies ECON, DCONT, NT, and CU, then $R$ also satisfies CAN.  
\end{proposition}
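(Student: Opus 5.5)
The plan is to reduce CAN to Lemma~\ref{lemma: pair cancellation} and to CU on linear and symmetric ``gadget'' profiles, glued together with ECON and NT. As in the Lemma, I read the conclusion of CAN as a statement about $R\mid_{\{x,y\}}$, with $k_{1},k_{2}\geq 2$ so that both shifted profiles are defined. Write $A:=\mathord{\succsim}+(1_{S})_{k_{1}}+(1_{T})_{k_{2}}$ and $B:=\mathord{\succsim}+(1_{S})_{k_{1}-1}+(1_{T})_{k_{2}-1}$. If $k_{1}=k_{2}$, Lemma~\ref{lemma: pair cancellation} gives $R_{A}\mid_{\{x,y\}}=R_{\succsim}\mid_{\{x,y\}}=R_{B}\mid_{\{x,y\}}$ directly. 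It therefore remains to treat $k_{1}\neq k_{2}$; by the symmetry between $x$ and $y$ it suffices to take $k_{1}>k_{2}$.

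\textbf{Padding and mirror gadgets.} Let $N:=\sum_{k\in[l]}(1_{X})_{k}$. By NT, $xI_{N}y$, and also $xI y$ for $N$ plus any extra copies of $1_{X}$, so by ECON such paddings never change $R\mid_{\{x,y\}}$ (this is exactly how the Lemma's proof uses them). Define the mirror gadget
\[G:=N+\sum_{k\neq k_{1},k_{2}}(1_{X})_{k}+(1_{T})_{k_{1}}+(1_{S})_{k_{2}},\]
and let $G'$ be the same object with $k_{1},k_{2}$ replaced by $k_{1}-1,k_{2}-1$. Stripping the $1_{X}$ padding, $G$ and $G'$ are linear profiles in $\mathcal{L}^{x,y}_{l}$. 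Theorem~\ref{theorem: identity of the four} reduces the CU comparison to comparing $s^{B1}$ scores, and since $T$ (containing $y$) sits above $S$ in both, CU gives $yP_{G}x$ and $yP_{G'}x$. I would transfer this to the padded versions by the same ECON/NT step as in the Lemma.

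\textbf{Collapsing the sums.} Now $A+G=\mathord{\succsim}+\bigl((1_{S})_{k_{1}}+(1_{T})_{k_{1}}\bigr)+\bigl((1_{S})_{k_{2}}+(1_{T})_{k_{2}}\bigr)+\text{padding}$. Applying Lemma~\ref{lemma: pair cancellation} twice and ECON/NT once gives $R_{A+G}\mid_{\{x,y\}}=R_{\succsim}\mid_{\{x,y\}}$, and likewise $R_{B+G'}\mid_{\{x,y\}}=R_{\succsim}\mid_{\{x,y\}}$. By ECON part~2, if $yR_{A}x$ then $yP_{A+G}x$. Contrapositively, $xR_{\succsim}y$ forces $xP_{A}y$, and in the same way it forces $xP_{B}y$. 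Running the argument with the direct gadgets (which have $S$ above $T$, hence $xP$ by CU) gives the symmetric implications. This settles every case except the one where both sides are ``weak'', for instance $yP_{\succsim}x$ while $A$ and $B$ disagree.

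\textbf{Main obstacle.} The remaining disagreement between $R_{A}$ and $R_{B}$ cannot be excluded by ECON alone, because ECON only propagates weak or strict preference in one direction. Here I would copy Step~1 of the Lemma. Assume, say, $xP_{A}y$ but $\neg(xP_{B}y)$. Pass from $A$ to $B$ through the intermediate profile $\mathord{\succsim}+(1_{S})_{k_{1}-1}+(1_{T})_{k_{2}}$, so that each step moves a single coalition down one class. Along this path, DCONT (in its equivalent form DCONT*) forces an indifference at some step. Adding the appropriate CU gadget to that indifferent profile via ECON part~2 should then contradict the strict relations obtained in the previous paragraph. Making this chain of cases close without circularity is the step I expect to be hardest, and I would check it case by case on the sign pattern of $R_{\succsim}$, $R_{A}$ and $R_{B}$ on $\{x,y\}$.
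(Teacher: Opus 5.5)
Your proposal has a genuine gap, and it sits exactly where the content of CAN lies.

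\textbf{What your gadgets actually give.} Your mirror gadget $G$ and the direct gadget are strictly ordered by CU, so ECON carries information in one direction only. Moreover, the two gadgets do not give ``symmetric'' implications. Up to padding, $A$ is $\mathord{\succsim}$ plus the direct gadget, which gives $xR_{\succsim}y\Rightarrow xP_{A}y$. Up to pairs and padding, $\mathord{\succsim}$ is $A+G$, which gives $yR_{A}x\Rightarrow yP_{\succsim}x$, and this is the contrapositive of the same statement. So all you have shown is that $xR_{\succsim}y$ forces $xP_{A}y$ and $xP_{B}y$. The whole case $yP_{\succsim}x$ is open, and it is not a corner case. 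For $R^{B1}$ one has $s_{A}(x)-s_{A}(y)=s_{B}(x)-s_{B}(y)=s_{\succsim}(x)-s_{\succsim}(y)+(k_{1}-k_{2})$, which can be positive, negative or zero there.

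\textbf{Why the DCONT repair does not close it.} DCONT only forbids a strict preference flipping to the opposite strict preference when one coalition moves by one class. It therefore says nothing against, for example, $xP_{A}y$ together with $xI_{B}y$, which is one of the disagreements you must rule out. When DCONT does force $xI_{C}y$ at $C=\mathord{\succsim}+(1_{S})_{k_{1}-1}+(1_{T})_{k_{2}}$, that indifference is consistent with what you have. For $k_{2}\leq k_{1}-2$, adding the mirror gadget of $C$ just returns $yP_{\succsim}x$.

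\textbf{The missing idea.} Connect $A$ to $B$ directly by adding a \emph{balanced} profile $D$ with $xI_{D}y$. Adding an indifferent summand preserves $R\mid_{\{x,y\}}$ exactly, by both parts of ECON. The right choice is
\[
D:=\sum_{k\notin\{k_{1},k_{1}-1,k_{2},k_{2}-1\}}(1_{X})_{k}+(1_{T})_{k_{1}}+(1_{S})_{k_{1}-1}+(1_{S})_{k_{2}}+(1_{T})_{k_{2}-1}.
\]
With it, $A+D+N=B+(1_{S}+1_{T})_{k_{1}}+(1_{S}+1_{T})_{k_{2}}$ plus $1_{X}$-padding. Lemma~\ref{lemma: pair cancellation} removes the two pairs and NT/ECON remove the padding, so $R_{A}\mid_{\{x,y\}}=R_{B}\mid_{\{x,y\}}$.

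It remains to certify $xI_{D}y$:
\begin{itemize}
\item When $k_{2}\leq k_{1}-2$, $D\in\mathcal{L}^{x,y}_{l}$ and $x,y$ have equal $s^{B1}$-scores in it, so CU gives $xI_{D}y$ directly.
\item When $k_{2}=k_{1}-1$, $D$ contains two copies of $1_{S}$ in class $k_{1}-1$ and is not linear. This is where DCONT is really needed. Moving one copy of $S$ down to class $k_{1}-2$ gives a profile which, after adding $N$, splits into two parts: the CU profile with $T$ at $k_{1}$ above $S$ at $k_{1}-1$, and a same-class $S,T$ pair, indifferent by Step~1 of the Lemma. Hence this profile has $yPx$. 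Moving the copy up to class $k_{1}$ gives $xPy$ in the same way. DCONT applied to these two single moves yields $yR_{D}x$ and $xR_{D}y$.
\end{itemize}
So DCONT should certify the indifference of this gadget, not be applied along the path from $A$ to $B$.
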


\begin{proof}
Let $l\in\mathbb{N}_{+}$. Let $\mathord{\succsim}=(\Sigma_{l},\dots,\Sigma_{1})\in\mathcal{R}_{l}$, $x,y\in X$, and $S,T\in \mathcal{X}$ such that $x\in S\not\ni y$ and $y\in T\not\ni x$, $k_{1},k_{2}\in[l]\setminus\{1\}$. We will prove that
\[R_{\succsim+(1_{S})_{k_{1}}+(1_{T})_{k_{2}}}\mid_{\{x,y\}}=R_{\succsim+(1_{S})_{k_{1}-1}+(1_{T})_{k_{2}-1}}\mid_{\{x,y\}}.\]
Without loss of generality, we can assume that $k_{1}\geq k_{2}$. We consider three cases. 

Case~1: $k_{1}=k_{2}$. 
By applying Lemma~\ref{lemma: pair cancellation}, we have that 
$R_{\succsim+(1_{S})_{k_{1}}+(1_{T})_{k_{1}}}\mid_{\{x,y\}}=R_{\succsim}\mid_{\{x,y\}}$ and $R_{\succsim+(1_{S})_{k_{1}-1}+(1_{T})_{k_{1}-1}}\mid_{\{x,y\}}=R_{\succsim}\mid_{\{x,y\}}$. Hence, the lemma follows. 

Case~2: $k_{2}=k_{1}-1$. 
The proof of this case is made up of two steps. 

Step~1: We first prove that $xI_{\succsim_{1}}y$, where 
\[\mathord{\succsim}_{1}:=\left(\sum_{k\neq k_{1},k_{1}-1,k_{1}-2}(1_{X})_{k}\right)+(1_{T})_{k_{1}}+(1_{S})_{k_{1}-1}+(1_{S})_{k_{1}-1}+(1_{T})_{k_{1}-2}.\]
Let $\mathord{\succsim}_{2}:=\left(\sum_{k\neq k_{1},k_{1}-1,k_{1}-2}(1_{X})_{k}\right)+(1_{T})_{k_{1}}+(1_{S})_{k_{1}-1}+(1_{S})_{k_{1}-2}+(1_{T})_{k_{1}-2}$, and let 
$\mathord{\succsim}_{3}:=\mathord{\succsim}_{2}+\sum_{k\in[l]}(1_{X})_{k}$. Since $xI_{\sum_{k\in[l]}(1_{X})_{k}}y$ by NT, ECON demands that 
\[R_{\succsim_{2}}\mid_{\{x,y\}}=R_{\succsim_{3}}\mid_{\{x,y\}}.\]
Notice that $\mathord{\succsim}_{3}=\mathord{\succsim}_{3}'+\mathord{\succsim}_{3}''$, where $\mathord{\succsim}_{3}':=\left(\sum_{k\neq k_{1},k_{1}-1}(1_{X})_{k}\right)+(1_{T})_{k_{1}}+(1_{S})_{k_{1}-1}$ and $\mathord{\succsim}_{3}'':=\left(\sum_{k\neq k_{1}-2}(1_{X})_{k}\right)+(1_{S})_{k_{1}-2}+(1_{T})_{k_{1}-2}$. By Step~1 of the proof of Lemma~\ref{lemma: pair cancellation}, we have $xI_{\succsim''_{3}}y$. Hence, ECON says that $R_{\succsim_{3}}\mid_{\{x,y\}}=R_{\succsim'_{3}}\mid_{\{x,y\}}$. By CU, we have $yP_{\succsim'_{3}}x$. Hence, we obtain that $yP_{\succsim_{3}}x$, and so $yP_{\succsim_{2}}x$. Since $\succsim_{1}$ is obtained from $\succsim_{2}$ by upgrading the position of $\Gamma:=1_{S}$ from the $(k_{1}-2)$-th class to the $(k_{1}-1)$-th class, DCONT implies $yR_{\succsim_{1}}x$. 

Let $\mathord{\succsim}_{4}:=\left(\sum_{k\neq k_{1},k_{1}-1,k_{1}-2}(1_{X})_{k}\right)+(1_{T})_{k_{1}}+(1_{S})_{k_{1}}+(1_{S})_{k_{1}-1}+(1_{T})_{k_{1}-2}$, and let $\mathord{\succsim}_{5}:=\mathord{\succsim}_{4}+\sum_{k\in[l]}(1_{X})_{k}$. By NT and ECON, we have that $R_{\succsim_{4}}\mid_{\{x,y\}}=R_{\succsim_{5}}\mid_{\{x,y\}}$. Notice that $\mathord{\succsim}_{5}=\mathord{\succsim}_{5}'+\mathord{\succsim}_{5}''$, where $\mathord{\succsim}_{5}':=\left(\sum_{k\neq k_{1}-1,k_{1}-2}(1_{X})_{k}\right)+(1_{S})_{k_{1}-1}+(1_{T})_{k_{1}-2}$ and $\mathord{\succsim}''_{5}:=\left(\sum_{k\neq k_{1}}(1_{X})_{k}\right)+(1_{S})_{k_{1}}+(1_{T})_{k_{1}}$. By Step~1 of the proof of Lemma~\ref{lemma: pair cancellation}, we have that $xI_{\succsim''_{5}}y$. Hence, ECON demands that $R_{\succsim_{5}}\mid_{\{x,y\}}=R_{\succsim_{5}'}\mid_{\{x,y\}}$. By CU, we have $xP_{\succsim_{5}'}y$. Hence, we can conclude that $xP_{\succsim_{4}}y$. Since $\mathord{\succsim}_{1}$ is obtained from $\mathord{\succsim}_{4}$ by degrading $\Gamma=1_{S}$ from the $k_{1}$-th class to the $(k_{1}-1)$-th class, DCONT demands that $xR_{\succsim_{1}}y$. 
We have shown that $yR_{\succsim_{1}}x$ and $xR_{\succsim_{1}}y$. These imply $xI_{\succsim_{1}}y$.

Step~2: We prove the lemma. Let $\mathord{\succsim}_{6}:=\mathord{\succsim}+(1_{S})_{k_{1}}+(1_{T})_{k_{1}-1}+\mathord{\succsim}_{1}+\sum_{k\in[l]}(1_{X})_{k}$. Because $xI_{\succsim_{1}}y$ (Step~1) and $xI_{\sum_{k\in[l]}(1_{X})_{k}}y$ by NT, we have that 
\[R_{\succsim+(1_{S})_{k_{1}}+(1_{T})_{k_{1}-1}}\mid_{\{x,y\}}=R_{\succsim_{6}}\mid_{\{x,y\}}.\]
Notice that $\mathord{\succsim}_{6}=\mathord{\succsim}+(1_{S})_{k_{1}-1}+(1_{T})_{k_{1}-2}+\mathord{\succsim}_{6}'+(1_{S}+1_{T})_{k_{1}}+(1_{S}+1_{T})_{k_{1}-1}$, where $\mathord{\succsim}'_{6}:=\sum_{k\neq k_{1},k_{1}-1,k_{1}-2}(1_{X})_{k}+\sum_{k\in[l]}(1_{X})_{k}$. By Lemma~\ref{lemma: pair cancellation}, adding the last two terms, $(1_{S}+1_{T})_{k_{1}}$ and $(1_{S}+1_{T})_{k_{1}-1}$, does not affect the social ranking between $x$ and $y$. Furthermore, $xI_{\succsim'_{6}}y$ holds by NT. Hence, ECON says that $R_{\succsim_{6}}\mid_{\{x,y\}}=R_{\succsim+(1_{S})_{k_{1}-1}+(1_{T})_{k_{1}-2}}\mid_{\{x,y\}}$. To conclude, we obtain that 
\[R_{\succsim+(1_{S})_{k_{1}}+(1_{T})_{k_{1}-1}}\mid_{\{x,y\}}=R_{\succsim+(1_{S})_{k_{1}-1}+(1_{T})_{k_{1}-2}}\mid_{\{x,y\}}.\]

Case~3: $k_{2}\leq k_{1}-2$. 
Now, let $\mathord{\succsim}_{7}:=\mathord{\succsim}+(1_{S})_{k_{1}}+(1_{T})_{k_{2}}+\mathord{\succsim}_{8}+\sum_{k\in[l]}(1_{X})_{k}$, where $\mathord{\succsim}_{8}:=\left(\sum_{k\neq k_{1},k_{1}-1,k_{2},k_{2}-1}(1_{X})_{k}\right)+(1_{T})_{k_{1}}+(1_{S})_{k_{1}-1}+(1_{S})_{k_{2}}+(1_{T})_{k_{2}-1}$. By CU, we have $xI_{\succsim_{8}}y$. By NT, we have $\sum_{k\in[l]}(1_{X})_{k}$. Hence, we have $R_{\succsim_{7}}\mid_{\{x,y\}}=R_{\succsim+(1_{S})_{k_{1}}+(1_{T})_{k_{2}}}\mid_{\{x,y\}}$. Notice that $\mathord{\succsim}_{7}=\mathord{\succsim}+(1_{S})_{k_{1}-1}+(1_{T})_{k_{2}-1}+\mathord{\succsim}'_{7}+\mathord{\succsim}''_{7}$, where $\mathord{\succsim}'_{7}=(1_{S}+1_{T})_{k_{1}}+(1_{S}+1_{T})_{k_{2}}$ and $\mathord{\succsim}''_{7}:=\left(\sum_{k\neq k_{1},k_{1}-1,k_{2},k_{2}-1}(1_{X})_{k}\right)+\sum_{k\in[l]}(1_{X})_{k}$. We have $xI_{\succsim''_{7}}y$ by NT. Furthermore, Lemma~\ref{lemma: pair cancellation} says that we can drop $\succsim_{7}'$ without changing the social ranking between $x$ and $y$. Hence, we have 
$R_{\succsim_{7}}\mid_{\{x,y\}}=R_{\succsim+(1_{S})_{k_{1}-1}+(1_{T})_{k_{1}-2}}$. This completes the proof. 
\end{proof}

\begin{definition}
An SRS $R$ satisfies \textit{monotonicity} (MON) if, for any $\mathord{\succsim}=(\Sigma_{l},\dots,\Sigma_{1})\in\mathcal{R}_{l}$, $x,y\in X$, $\hat{k}\in[l]$, and $S\in \mathcal{X}$ such that $x\in S \not\ni y$, we have that $xR_{\succsim+(1_{S})_{\hat{k}}}y\Rightarrow xP_{\succsim+(1_{S})_{\hat{k}+1}}y$.
\end{definition}

\begin{proposition}
If an SRS $R$ satisfies ECON, DCONT, NT, and CU, then $R$ also satisfies MON.  
\end{proposition}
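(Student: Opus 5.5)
We follow the template used in the proof of CAN: replace the comparison at $\succsim+(1_S)_{\hat k+1}$ by a comparison at $\succsim+(1_S)_{\hat k}$ plus a ``neutral'' padding profile that strictly favors $x$. The strict part of ECON then upgrades weak preference to strict preference. Fix $\succsim\in\mathcal{R}_l$, $x\in S\not\ni y$, and $\hat k$ with $\hat k+1\le l$, and assume $xR_{\succsim+(1_S)_{\hat k}}y$.

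\emph{Step 1 (a strict base comparison).} We build a profile in $\mathcal{L}^{x,y}_l$ in which moving one $x$-coalition up by one class, relative to a $y$-coalition, yields strict preference. A natural choice, mirroring $\succsim_3$ in Lemma~\ref{lemma: pair cancellation}, is
\[
\mathord{\succsim}_{a}:=\Big(\sum_{k\neq \hat k+1,\hat k}(1_X)_k\Big)+(1_S)_{\hat k+1}+(1_T)_{\hat k},
\]
where $T:=\{y\}$. This profile is linear, and $x,y$ each occur equally often, because the classes filled with $X$ contain both. CU and the computation in Theorem~\ref{theorem: identity of the four} then give $xP_{\succsim_a}y$.

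\emph{Step 2 (padding).} Form
\[
\mathord{\succsim}+(1_S)_{\hat k}+\succsim_a+\sum_{k\in[l]}(1_X)_k .
\]
Regrouped, this is
\[
\mathord{\succsim}+(1_S)_{\hat k+1}+\big[(1_S)_{\hat k}+(1_T)_{\hat k}\big]+\Big(\sum_{k\neq \hat k+1,\hat k}(1_X)_k+\sum_{k\in[l]}(1_X)_k\Big).
\]
Lemma~\ref{lemma: pair cancellation} removes the bracketed pair without affecting $R\mid_{\{x,y\}}$, and NT together with ECON removes the last term, which is neutral between $x$ and $y$. Hence the ranking of $x$ and $y$ in the padded profile equals that in $\succsim+(1_S)_{\hat k+1}$. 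The padded profile can also be written as the sum of $\succsim+(1_S)_{\hat k}$, where $xR y$ by hypothesis, and $\succsim_a+\sum_k(1_X)_k$, where $xPy$ by Step 1 combined with ECON using $xI y$ on $\sum_k(1_X)_k$. Clause 2 of ECON gives $xP$ on the sum, so $xP_{\succsim+(1_S)_{\hat k+1}}y$.

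\emph{Main obstacle.} The delicate point is the bookkeeping: every auxiliary profile must lie in $\mathcal{R}_l$ with all classes nonempty, and the ECON additions must be aligned class by class. The $(1_X)_k$ fillers guarantee nonemptiness, and the requirement $\hat k+1\le l$ is implicit in the statement. A secondary check is the CU sign in Step 1. With singleton classes, $x$ occupies positions $\hat k+1$ and the filler positions, while $y$ occupies $\hat k$ and the same fillers, so $s^{B1}(x)>s^{B1}(y)$ and therefore $i(x)<i(y)$. Finally, if $S$ happens to equal some filler set this causes no difficulty, since only multiplicities matter.
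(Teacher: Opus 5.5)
Your proof is correct and is essentially the paper's own argument. It uses the same linear profile $\succsim_a$ with $T=\{y\}$ (strict by CU), the same padding by $\sum_{k\in[l]}(1_X)_k$, the same regrouping into $(1_S)_{\hat k+1}$ plus the pair $(1_S+1_T)_{\hat k}$ (removed by Lemma~\ref{lemma: pair cancellation}), and the same final use of the strict clause of ECON; your note that $\hat k+1\le l$ must be assumed makes explicit a point the paper leaves implicit.
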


\begin{proof}
Let $l\in\mathbb{N}_{+}$, $k_{1}\in[l]$, $\mathord{\succsim}=(\Sigma_{l},\dots,\Sigma_{1})\in\mathcal{R}_{l}$, $x,y\in X$, and $S\in\mathcal{X}$ with $x\in S\not\ni y$. Assume that $xR_{\succsim+(1_{S})_{k_{1}}}y$. Let $T:=\{y\}$.

Let $\mathord{\succsim}_{1}:=\left(\sum_{k\neq k_{1}+1,k_{1}}(1_{X})_{k}\right)+(1_{S})_{k_{1}+1}+(1_{T})_{k_{1}}$. By CU, we have $xP_{\succsim_{1}}y$. Let $\mathord{\succsim}_{2}:=\mathord{\succsim}_{1}+\sum_{k\in[l]}(1_{X})_{k}$. Since $xI_{\sum_{k\in[l]}(1_{X})k}y$ by NT, we have that $R_{\succsim_{1}}\mid_{\{x,y\}}=R_{\succsim_{2}}\mid_{\{x,y\}}$ by ECON. Hence, $xP_{\succsim_{2}}y$. 

By ECON, we have $xP_{\succsim+(1_{S})_{k_{1}}+\succsim_{2}}y$. Notice that $\mathord{\succsim}+(1_{S})_{k_{1}}+\mathord{\succsim}_{2}=\mathord{\succsim}+(1_{S})_{k_{1}+1}+(1_{S}+1_{T})_{k_{1}}+\sum_{k\neq k_{1}+1,k_{1}}(1_{X})_{k}+\sum_{k\in[l]}(1_{X})_{k}$. Since $xI_{\sum_{k\neq k_{1}+1,k_{1}}(1_{X})_{k}}y$ and $xI_{\sum_{k\in[l]}(1_{X})_{k}}y$ by NT, we have that 
$R_{\succsim+(1_{S})_{k_{1}}+\succsim_{2}}\mid_{\{x,y\}}=R_{\succsim+(1_{S})_{k_{1}+1}+(1_{S}+1_{T})_{k_{1}}}\mid_{\{x,y\}}$. By Lemma~\ref{lemma: pair cancellation}, the right-hand side is equal to $R_{\succsim+(1_{S})_{k_{1}+1}}$. These show that $xP_{\succsim+(1_{S})_{k_{1}+1}}y$. 
\end{proof}

\begin{proof}[Proof of Theorem~\ref{theorem: characterization of B1}]
For simplicity, we will denote $s^{B1}$ as $s$. For $\mathord{\succsim}\in\mathcal{R}$ and $x\in X$, we call $s
_{\succsim}(x)$ the (Borda) score of $x$ (at $\succsim$).
Let $\mathord{\succsim}=(\Sigma_{l},\dots,\Sigma_{1})\in\mathcal{R}_{l}$ and $x,y\in X$. 

Let $\mathord{\succsim}':=\mathord{\succsim}+(1_{\{x\}}+1_{\{y\}})_{1}+(1_{\{x\}}+1_{\{y\}})_{1}+\dots+(1_{\{x\}}+1_{\{y\}})_{1}$, where $(1_{\{x\}}+1_{\{y\}})_{1}$ appears $l$ times. By IPP, we have $R_{\succsim'}\mid_{\{x,y\}}=R_{\succsim}\mid_{\{x,y\}}$. 

Let $\mathord{\succsim}'':=\mathord{\succsim}+\sum_{k\in[l]}(1_{\{x\}}+1_{\{y\}})_{k}$; that is, $\succsim''$ is obtained from $\succsim'$ by shifting $(l-1)$ $(1_{\{x\}}+1_{\{y\}})_{1}$'s to the upper classes. By CAN, we have $R_{\succsim'}\mid_{\{x,y\}}=R_{\succsim''}\mid_{\{x,y\}}$. 
We will prove that (i)~if $s_{\succsim}(x)=s_{\succsim}(y)$, then $xI_{\succsim}y$, and (ii)~if $s_{\succsim}(x)>s_{\succsim}(y)$, then $xP_{\succsim}y$. 

Proof of (i): Assume $s_{\succsim}(x)=s_{\succsim}(y)$. If there exists $S\in\Sigma_{l}+\dots+\Sigma_{2}$ such that $x\in S\not\ni y$, then it follows that $T\in \Sigma_{l}+\dots+\Sigma_{2}$ such that $y\in T\not\ni x$ (otherwise, $s_{\succsim}(x)$ must be greater than $s_{\succsim}(y)$). By CAN, we can downgrade them by one class without changing the social ranking between $x$ and $y$. Notice that this process decreases the scores of each $x$ and $y$ by one unit. If there is still another $S'$ in the second or higher classes such that $x\in S'\not\ni y$, then there should also exist $T'$ in the second or higher classes such that $y\in T'\not\ni x$ (because the Borda scores of $x$ and $y$ must be the same). By CAN, we can downgrade them by one class respectively without changing the social ranking between $x$ and $y$. We can repeat this simultaneous downgrading process until all the $(x,y)$-discriminating coalitions in $\Sigma_{l}+\dots+\Sigma_{2}$ are relegated into the worst class. Let $\succsim'$ be the resulting coalitional ranking. If there are no $(x,y)$-discriminating coalitions in $\Sigma_{l}+\dots+\Sigma_{2}$, then $\mathord{\succsim}' = \mathord{\succsim}$. By the definition of $\succsim'$, we have $R_{\succsim}\mid_{\{x,y\}}=R_{\succsim'}\mid_{\{x,y\}}$. 

Let $\succsim''$ be a coalitional ranking obtained from $\succsim'$ by dropping all $(x,y)$-discriminating sets in the first class except one pair of $S$ and $T$ (notice that such $S$ and $T$ exist by the definition of $\succsim'$). By IPP, we have $R_{\succsim'}\mid_{\{x,y\}}=R_{\succsim''}\mid_{\{x,y\}}$. Furthermore, note that all $l$ classes are symmetric with respect to $x$ and $y$. Hence, NT demands that $xI_{\succsim''}y$. Thus, we can conclude that $xI_{\succsim}y$.

Proof of (ii): Assume $s_{\succsim}(x)>s_{\succsim}(y)$. Then, we can take some $S\in\Sigma_{l}+\dots+\Sigma_{2}$ with $x\in S\not\ni y$ and downgrade it by one class. If the Borda score of $x$ is still higher than that of $y$, we can take some $S'$ with $x\in S'\not\ni y$ and downgrade it again. Repeat this process until the Borda scores of $x$ and $y$ are the same. Let $\succsim'$ be the resulting coalitional ranking. By (i), we have $xI_{\succsim'}y$. Since $\succsim$ is obtained from $\succsim'$ by upgrading some coalitions that include $x$ and not $y$, MON says that $xP_{\succsim}y$.
\end{proof}

Recall that we proved Theorem~\ref{theorem: characterization of B1} by first showing some logical consequences (CAN and MON) of the four axioms (ECON, DCONT, NT, and CU) and then using only CAN, MON, NT, and IPP. Hence, our proof also shows the following. 

\begin{corollary}
\label{corollary:alternative characterization of B1}
An SRS $R$ satisfies MON, CAN, NT, and IPP if and only if $R=R^{B1}$.
\end{corollary}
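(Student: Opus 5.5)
The statement has two directions, and the ``if'' direction is where essentially all the work has already been done.

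\emph{Sufficiency.} I would observe that the proof of Theorem~\ref{theorem: characterization of B1} never uses ECON, DCONT, or CU directly. After CAN and MON are derived in the two preceding propositions, the argument relies only on CAN, MON, NT, and IPP. So it is enough to rerun that argument with CAN and MON taken as hypotheses. Fix $\succsim\in\mathcal{R}_l$ and $x,y$, and write $s=s^{B1}$.

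For the case $s_{\succsim}(x)=s_{\succsim}(y)$, I proceed as follows.
\begin{enumerate}
\item If the $(x,y)$-discriminating coalitions in classes $2,\dots,l$ cannot be paired off, first pad the worst class with copies of $1_{\{x\}}$ and $1_{\{y\}}$. This is allowed by IPP, one coalition at a time.
\item Using CAN, repeatedly downgrade a pair $S\ni x$, $T\ni y$ of discriminating coalitions, each by one class. This lowers both scores by one, so they stay equal.
\item Continue until every discriminating coalition lies in the worst class. Equal scores guarantee a partner exists at each step, since any unmatched discriminating coalition above class~1 would give one individual a strictly larger score.
\item Delete the surplus discriminating coalitions in the worst class by IPP.
\item The result is symmetric in $x$ and $y$ class by class, so the transposition $\sigma=(x\,y)$ leaves it invariant, and NT gives $xI y$.
\end{enumerate}

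For the case $s_{\succsim}(x)>s_{\succsim}(y)$, I downgrade coalitions $S$ with $x\in S\not\ni y$ one class at a time until the scores are equal, then apply the equality case. Reversing the last step is an upgrade of such an $S$ from a position where $xR y$ holds, so MON gives $xP y$ there. Applying MON again along the chain back to $\succsim$ gives $xP_{\succsim}y$. Since $R$ is complete, these two cases determine $R_\succsim|_{\{x,y\}}$ entirely, so it coincides with $R^{B1}$.

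\emph{Necessity.} I would check each axiom against the additive score.
\begin{itemize}
\item NT holds because $s$ is invariant under renaming of individuals.
\item IPP holds because a coalition added to class~1 contributes $(1-1)\cdot 1=0$ to the score.
\item CAN holds because shifting $S\ni x$ and $T\ni y$ down one class each lowers both $s(x)$ and $s(y)$ by exactly one.
\item MON holds because upgrading $S$ with $x\in S\not\ni y$ raises $s(x)$ by one and leaves $s(y)$ unchanged, so a weak preference becomes strict.
\end{itemize}

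The main obstacle I expect is bookkeeping rather than a conceptual gap. In the equality case, the pairing argument has to carry a class-by-class score accounting so that a matching $T$ is always available. I also need to confirm that CAN as stated, with $k_1,k_2\ge 2$, is enough to move coalitions all the way to class~1. I would handle the first point with an induction on the total number of discriminating coalitions above class~1.
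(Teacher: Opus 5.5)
Your route is the paper's: IPP padding, CAN-paired downgrades, IPP deletion in the worst class, NT, and MON for the strict case. Your necessity checks are also fine. But you have dropped the one device that makes the paper's proof of Theorem~\ref{theorem: characterization of B1} work.

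The issue is that every class of a ranking in $\mathcal{R}_l$ must be nonempty. CAN and MON can be invoked only when the base ranking, i.e.\ the one with the moved coalitions removed, lies in $\mathcal{R}_l$. So you may never move the last coalition out of a class, and the chains you build can leave the domain. Two examples:
\begin{itemize}
\item For $(\mset{\{x\}},\mset{\{y\}})$ you have $s(x)>s(y)$. Downgrading $\{x\}$ empties the top class, so your MON chain does not exist.
\item For $(\mset{\{x,z\},\{y\}},\mset{\{z\}})$ you have $s(x)=s(y)$, and the ranking is not invariant under $\sigma=(x\,y)$. The only CAN move available to you empties the top class.
\end{itemize}
Your final NT step has the same problem. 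Deleting every discriminating coalition from the worst class may empty it, while keeping an arbitrary surviving pair such as $\{x,z\},\{y\}$ destroys $\sigma$-invariance. Your step~1 does not address this. It pads only class~1, and it is triggered by a pairing failure that, by your own step~3, never happens.

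The missing idea is the paper's preliminary construction of $\succsim'$ and $\succsim''$:
\begin{enumerate}
\item Add $l$ copies of $1_{\{x\}}+1_{\{y\}}$ to the worst class by IPP.
\item Use CAN backwards to lift one such pair into each of the classes $2,\dots,l$. Every base involved still contains the original $\succsim$, hence lies in $\mathcal{R}_l$.
\item Never move these anchor pairs again. They keep every class nonempty through all later CAN, MON and IPP steps. Since they add equally to $s(x)$ and $s(y)$, your pairing argument and your MON chain run unchanged on the remaining discriminating coalitions.
\item At the end, delete from class~1 every discriminating coalition except its anchor pair. This leaves a $\sigma$-invariant ranking in $\mathcal{R}_l$, and NT applies.
\end{enumerate}

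A smaller point: your CAN check for $R^{B1}$ only yields equality of the restrictions to $\{x,y\}$. That is the version the paper actually derives and uses. Read literally, without the restriction, CAN fails for $R^{B1}$: taking $S=\{x,z\}$ and $T=\{y,z\}$ changes the comparison of $x$ and $z$. So you should state that you are using the restricted reading.
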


\subsection{Independence of the axioms}

We define an SRS called \textit{non-worst lexcel}, denoted $R^\text{NWL}$, as follows. For any $\mathord{\succsim}=(\Sigma_{l},\dots,\Sigma_{1})\in\mathcal{R}_{l}$ and $x,y\in X$, we have $xR^\text{NWL}_{\succsim}y\Leftrightarrow (x_{l},\dots,x_{2})\geq^{L}(y_{l},\dots,y_{2})$, where for $z\in\{x,y\}$ and $k\in[l]$, we let $z_{k}:=\sum_{S\ni z}\Sigma_{k}(S)$. In other words, $R^\text{NWL}$ is a variant of lexcel that disregards the worst class. 
Let $R^\text{const}$ be a constant SRS such that for any $\mathord{\succsim}\in\mathcal{R}$, $R_{\succsim}^{I}=X\times X$. 

We now introduce three variants of $R^{B1}$. For $l\in \mathbb{N}_{+}$, $\mathord{\succsim}\in\mathcal{R}_{l}$, and $x,y\in X$, we say that $\{x,y\}$ is a \textit{buddy pair} at $\succsim$ if $\sum_{k\in[l]}\sum_{S\supseteq \{x,y\}}\Sigma_{k}(S)> l$ (i.e., the number of coalitions that $x$ and $y$ join simultaneously is greater than the number of equivalence classes; in this sense, being a buddy pair means frequent collaboration). 

Let $\vartriangleright$ be a fixed linear order on $X$. Let $R^{B1,\vartriangleright}$ be an SRS obtained from $R^{B1}$ by breaking the ties according to $\vartriangleright$. That is, (i)~if $xI^{B1}_{\succsim}y$, then $R^{B1,\vartriangleright}_{\succsim}\mid_{\{x,y\}}=\vartriangleright\mid_{\{x,y\}}$, and (ii)~otherwise, $R^{B1,\vartriangleright}_{\succsim}\mid_{\{x,y\}}=R^{B1}_{\succsim}\mid_{\{x,y\}}$. 

Let $R^{B1:\vartriangleright}$ be defined as follows: for any $x,y\in X$ with $x\vartriangleright y$ and for any $\mathord{\succsim}=(\Sigma_{l},\dots,\Sigma_{1})\in \mathcal{R}_{l}$, we have (i)~if $l=2$ and there exists $S\in \mathcal{X}$ such that $x\in S\not\ni y$ and $\Sigma_{2}(S)>0$, then $xP^{B1:\vartriangleright}_{\succsim}y$, (ii)~if $l=2$ and there does not exist such an $S$, then $xI_{\succsim}y$, and (iii)~if $l\neq 2$, then $R^{B1:\vartriangleright}\mid_{\{x,y\}}=R^{B1}_{\succsim}\mid_{\{x,y\}}$. 

We define $R^{B1:\text{const}}$ as follows: 
(i)~if $\{x,y\}$ is a buddy pair, then $R^{B1:\text{const}}_{\succsim}\mid_{\{x,y\}}=R^\text{const}_{\succsim}\mid_{\{x,y\}}$, and (ii)~if $\{x,y\}$ is not a buddy pair, then $R^{B1:\text{const}}_{\succsim}\mid_{\{x,y\}}=R^{B1}_{\succsim}\mid_{\{x,y\}}$. 

Finally, $R^{B1:\text{NWL}}$ is defined as follows: (i)~if $l= 3$, then $R^{B1:\text{NWL}}_{\succsim}\mid_{\{x,y\}}=R^\text{NWL}_{\succsim}\mid_{\{x,y\}}$, and (ii)~if $l\neq 3$, then $R^{B1:\text{NWL}}_{\succsim}\mid_{\{x,y\}}=R^{B1}_{\succsim}\mid_{\{x,y\}}$.

Let $R^{\tilde{B1}}$ be an SRS such that $xR^{\tilde {B1}}_{\succsim}y$ if and only if $\sum_{k\in[l]}\sum_{S\ni x}k\cdot \Sigma_{k}(S)\geq \sum_{k\in[l]}\sum_{T\ni y}k\cdot \Sigma_{k}(T)$. The score assigned to the $k$-th-worst class is replaced with $k$ (for $R^{B1}$, it was $k-1$).

\begin{proposition}
The five axioms in Theorem~\ref{theorem: characterization of B1} are logically independent. In fact, 

\begin{itemize}
\item  $R^{B1:\text{const}}$ satisfies all five except ECON, 
\item  $R^{B1:\text{NWL}}$ satisfies all five except DCONT,
\item  $R^{B1:\vartriangleright}$ satisfies all five except NT,
\item  $R^{\tilde{B1}}$ satisfies all five except IPP,
\item  $R^\text{const}$ satisfies all five except CU.
\end{itemize}
\end{proposition}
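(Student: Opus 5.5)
For each of the five listed solutions, we plan to check two things: that it satisfies the four claimed axioms, and that it violates the remaining one, using an explicit small coalitional ranking for the violation. The checks that an axiom holds will mostly be inherited from $R^{B1}$. $R^{B1}$ satisfies all five axioms; this is the ``if'' part of Theorem~\ref{theorem: characterization of B1}, which is direct from additivity of $s^{B1}$ and from Theorem~\ref{theorem: identity of the four}. Each variant agrees with $R^{B1}$ except on a restricted region: buddy pairs, $l=2$, $l=3$, or a shift of scores. So for each variant we only need to check the axioms on that region, and on any interactions between it and its complement.

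\emph{ECON fails for $R^{B1:\text{const}}$.} CU, NT and DCONT hold for the following reasons.
\begin{itemize}
\item In $\mathcal{L}^{x,y}_l$ there are exactly $l$ coalitions, so no pair is a buddy pair, and CU is inherited.
\item Being a buddy pair is invariant under renaming, so NT holds.
\item On buddy pairs the relation is constant indifference, so DCONT holds trivially; a shift of $S$ between classes keeps $l$ and the counts fixed, so the buddy status does not change under the shift.
\item Adding $1_S$ with $x\in S\not\ni y$ does not change the number of coalitions containing $\{x,y\}$, so IPP holds.
\end{itemize}
For the violation, take two rankings in $\mathcal{R}_1$, each containing $\{x,y\}$ once and $\{x\}$ once, so that $xP^{B1}y$ trivially holds in neither (both classes are worst and all scores are $0$). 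A cleaner choice is $l=2$ with $\succsim$ having $\{x,y\}$ in the top class twice and $\{x\}$ in the top class. This ranking is not a buddy pair alone if $2\le 2$, so $xP y$. Doubling it gives 4 joint coalitions $>2$, hence indifference, which breaks ECON part~2.

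\emph{DCONT fails for $R^{B1:\text{NWL}}$.} The solution NWL is additive classwise, so ECON holds within $l=3$, and ECON only combines rankings with equal $l$. IPP holds because NWL ignores the worst class, and NT is clear. For CU we must compare lexcel on classes $3,2$ with the inversion order for linear symmetric rankings with $l=3$. We would verify this by enumerating the linear $l=3$ orders in which $x$ and $y$ appear equally often, and we expect it to check out. The DCONT violation uses $\Sigma_3$ versus $\Sigma_2$: $x$ has one coalition in class $3$ while $y$ has two in class $2$. Moving one of $y$'s coalitions up from class $2$ to class $3$ flips the ranking from $xP$ to $yP$.

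\emph{NT fails for $R^{B1:\vartriangleright}$.} For $l=2$ we check CU on $\mathcal{L}^{x,y}_2$: the Borda ranking there is exactly ``$x$ beats $y$ iff $x$ is in the top without $y$''. Neutrality fails by tie-breaking via $\vartriangleright$.

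\emph{IPP fails for $R^{\tilde{B1}}$.} This solution is additive with weights $k$, so ECON, DCONT, NT and CU hold; CU holds because equal counts make the $+1$ shift cancel. Adding $\{x\}$ to the worst class raises $x$'s score by $1$, which breaks IPP.

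\emph{CU fails for $R^{\text{const}}$.} This is trivial: the ranking $(\{x\},\{y\})$ gives $xP^{Bi}y$, whereas $R^{\text{const}}$ declares $x$ and $y$ indifferent.

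The main obstacle is the verification of ECON and DCONT at the boundary of each modified region. For $R^{B1:\text{const}}$, buddy status can change under summation, so ECON must be argued carefully or the choice of counterexample adjusted. For $R^{B1:\text{NWL}}$, the CU check at $l=3$ is where the claim might fail, and I would check it first by exhaustive enumeration.
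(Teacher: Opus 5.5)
\textbf{The gap is in the third bullet.} $R^{B1:\vartriangleright}$ does not satisfy CU, so your check there fails, and so does the statement as written. Your description of Borda on $\mathcal{L}^{x,y}_2$ is correct, but $R^{B1:\vartriangleright}$ applies it only in favour of the $\vartriangleright$-higher individual.

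Take $x\vartriangleright y$ and $\succsim=(\mset{\{y\}},\mset{\{x\}})\in\mathcal{L}^{x,y}_2$. Then $i_{\succsim}(x)=1>0=i_{\succsim}(y)$, so CU requires $yP_{\succsim}x$. But $\Sigma_2$ contains no coalition with $x$ and without $y$, so clause~(ii) of the definition gives $xI^{B1:\vartriangleright}_{\succsim}y$. In general, at $l=2$ the $\vartriangleright$-lower individual can never strictly beat the higher one, whereas the inversion ranking on $\mathcal{L}^{x,y}_2$ is symmetric in $x$ and $y$. Hence $R^{B1:\vartriangleright}$ violates both NT and CU (it does satisfy ECON, DCONT and IPP), and it does not witness the independence of NT.

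Your phrase that NT fails ``by tie-breaking via $\vartriangleright$'' describes $R^{B1,\vartriangleright}$, a different rule, and that rule would not help either. Breaking Borda ties contradicts CU at tied linear profiles such as $(\mset{\{x,y\}})\in\mathcal{L}^{x,y}_1$. It also contradicts DCONT: a tie broken as $xPy$ is reversed to $yPx$ by one upward shift of a coalition containing $y$ but not $x$. The paper's proof leaves this bullet to the reader, so it offers no argument to fall back on. Completing the proposition needs a genuinely different non-neutral SRS satisfying ECON, DCONT, IPP and CU, or an argument that none exists.

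\textbf{The other four bullets are fine.} They follow the same route as the paper: inherit four axioms from $R^{B1}$ off the modified region, then exhibit one violation.

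\emph{ECON for $R^{B1:\text{const}}$.} Your counterexample works once the bottom class is filled with a coalition not containing both $x$ and $y$. It uses the paper's mechanism (two non-buddy profiles summing to a buddy profile) and is actually sounder than the printed one. There, $\succsim'=(\mset{\{x,y\}},\mset{\{x\}})$ puts $\{x\}$ in the worst class, which gives $xIy$ rather than $xPy$.

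\emph{CU for $R^{B1:\text{NWL}}$.} The check you flag as the main risk is immediate. In $\mathcal{L}_3$ every $x_k,y_k$ lies in $\{0,1\}$, so comparing $(x_3,x_2)$ with $(y_3,y_2)$ lexicographically is the same as comparing $2x_3+x_2=s^{B1}_{\succsim}(x)$ with $2y_3+y_2=s^{B1}_{\succsim}(y)$. Your DCONT counterexample for this rule is correct with any nonempty worst class.
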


\begin{proof}
The proofs are straightforward, so we will prove the first statement in detail and leave the remaining ones to readers. 

Since the names of the individuals do not matter in the definition of $R^{B1:\text{const}}$, the SRS satisfies NT. Furthermore, regardless of whether $\{x,y\}$ is a buddy pair or not (i.e., regardless of whether (i) or (ii) is applied), the worst equivalence class does not matter in determining $R^{B1:\text{const}}_{\succsim}\mid_{\{x,y\}}$. Hence, IPP also holds. Suppose $\mathord{\succsim}=(\Sigma_{l},\dots,\Sigma_{1})\in\mathcal{L}^{x,y}_{l}$. Since $\lvert\Sigma_{k}\rvert=1$ by assumption, $\{x,y\}$ can appear at most $l$ times among $\Sigma_{l},\dots,\Sigma_{1}$. This means that $\{x,y\}$ is not a buddy pair; so, we have $R^{B1:\text{const}}_{\succsim}\mid_{\{x,y\}}=R^{B1}_{\succsim}\mid_{\{x,y\}}$ by the definition of $R^{B1:\text{const}}$. Hence, CU holds. Next, we will prove that $R^{B1:\text{const}}$ satisfies DCONT. Take any $l\in \mathbb{N}$ with $l\geq 2$, $k_{1}\in[l-1]$, $\mathord{\succsim}=(\Sigma_{l},\dots,\Sigma_{1})\in\mathcal{R}_{l}$, $x,y\in X$, and $\Gamma\in \mathcal{M}(\mathcal{X})$. Suppose that $xP_{\succsim+(\Gamma )_{k_{1}}}y$ and $yP_{\succsim+(\Gamma)_{k_1+1}}x$. By the definition of $R^{B1:\text{const}}$, this can happen only if $\{x,y\}$ is not a buddy pair at either $\mathord{\succsim}+(\Gamma)_{k_{1}}$ or $\mathord{\succsim}+(\Gamma)_{k_{1}+1}$. Hence, we have $R^{B1:\text{const}}_{\succsim+(\Gamma)_{k}}\mid_{\{x,y\}}=R^{B1}_{\succsim+(\Gamma)_{k}}\mid_{\{x,y\}}$ for $k \in \{k_{1},k_{1}+1\}$. Since $R^{B1}$ satisfies DCONT, there exist $\Gamma_{1},\Gamma_{2}\in\mathcal{M}(\mathcal{X})$ such that $\Gamma=\Gamma_{1}+\Gamma_{2}$ and $xI^{B1}_{\succsim+(\Gamma_{1})_{k_{1}}+(\Gamma_{2})_{k_{1}+1}}y$. Notice that shifting this $\Gamma_{2}$ does not make $\{x,y\}$ a buddy pair, and so we can conclude that $xI^{B1:\text{const}}_{\succsim+(\Gamma_{1})_{k_{1}}+(\Gamma_{2})_{k_{1}+1}}y$. This shows that $R^{B1:\text{const}}$ satisfies DCONT. 

Finally, we show that $R^{B1:\text{const}}$ does not satisfy ECON. Let $\mathord{\succsim}=(\mset{\{x,y\}},\mset{\{x,y\}})$ and $\mathord{\succsim}'=(\mset{\{x,y\}},\mset{\{x\}})$. Then, we have $xI^{B1:\text{const}}_{\succsim}y$, $xP^{B1:\text{const}}_{\succsim}y$, and $xI^{B1:\text{const}}_{\succsim+\succsim'}y$. This contradicts ECON.
\end{proof}

\begin{proposition}
The four axioms in Corollary~\ref{corollary:alternative characterization of B1} are logically independent. In fact, 
\begin{itemize}
\item $R^{\tilde{B1}}$ satisfies all four except IPP,
\item $R^{B1,\vartriangleright}$, for any $\vartriangleright$, satisfies all four except NT,
\item $R^\text{const}$ satisfies all four except MON,
\item $R^\text{NWL}$ satisfies all four except CAN.
\end{itemize}
\end{proposition}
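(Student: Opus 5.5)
The plan is to go through the four SRSs one at a time. For each one, I check the three axioms it is supposed to satisfy directly from its definition, and then give a small explicit coalitional ranking on which the remaining axiom fails. Two simple observations about $R^{B1}$ do most of the work. First, moving a coalition $S$ with $x\in S\not\ni y$ up one class raises $s^{B1}(x)$ by exactly one and leaves $s^{B1}(y)$ unchanged. Second, shifting both $S$ (with $x\in S\not\ni y$) and $T$ (with $y\in T\not\ni x$) down one class lowers both scores by one, so the difference $s^{B1}(x)-s^{B1}(y)$ is unchanged.

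For $R^{\tilde{B1}}$, the score is $s^{B1}(x)$ plus the number of coalitions containing $x$. The two observations above still hold for this score, which gives MON and CAN, and NT is clear. IPP fails: take $\mathord{\succsim}=(\mset{\{z\}})$, where $xI y$, and add $1_{\{x\}}$ to the worst class; $x$ gains one point and $xPy$. For $R^{B1,\vartriangleright}$, note that $xR^{B1,\vartriangleright}_{\succsim}y$ always implies $s^{B1}(x)\ge s^{B1}(y)$. After $S$ is moved up, the inequality becomes strict, so $xP y$, which is MON. Under a CAN shift the score difference is preserved, hence so is whether $x$ and $y$ are tied, and therefore the tie-breaking outcome is preserved too; this gives CAN. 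Adding a coalition to the worst class contributes $0$ to the score, which gives IPP. NT fails on any ranking symmetric in $x$ and $y$, for example $(\mset{\{z\}})$. There the tie is broken as $xPy$ when $x\vartriangleright y$, while the transposition $\sigma$ swapping $x$ and $y$ leaves the ranking unchanged, so NT would require $yPx$ as well. For $R^\text{const}$, CAN, NT and IPP hold trivially. MON fails because $x$ and $y$ are always indifferent, so the conclusion $xP y$ can never hold.

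For $R^\text{NWL}$, NT is clear, and IPP holds because the worst class is ignored. For MON, moving $S$ from class $\hat{k}$ to class $\hat{k}+1$ adds one to $x_{\hat{k}+1}$ and, if $\hat{k}\ge2$, subtracts one from $x_{\hat{k}}$; $y$'s vector is unchanged. Suppose $(x_l,\dots,x_2)\ge^{L}(y_l,\dots,y_2)$ before the move. Either the first difference lies above $\hat{k}+1$, in which case it is untouched, or all coordinates above $\hat{k}+1$ agree and $x_{\hat{k}+1}\ge y_{\hat{k}+1}$. In the second case the coordinate $\hat{k}+1$ becomes strictly larger for $x$. Either way $xPy$ after the move.

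The main obstacle is the counterexample for CAN, since lexicographic rules look like they might survive shifts. I would use $l=3$ and $\mathord{\succsim}=(\mset{\{z\}},\mset{\{y\}},\mset{\{z\}})$, with $S=\{x\}$ added to class $3$ and $T=\{y\}$ added to class $2$. Then $x$ has non-worst vector $(1,0)$ and $y$ has $(0,2)$, so $xP^\text{NWL}y$. After shifting both coalitions down one class, $x$ has $(0,1)$ and $y$ has $(0,1)$, so $xI^\text{NWL}y$, and CAN is violated.
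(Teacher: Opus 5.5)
Your proposal is correct and follows essentially the paper's route: one explicit violating ranking per solution, with examples that differ from the paper's but all check out (e.g.\ your three-class CAN counterexample for $R^\text{NWL}$ works). You also go further than the paper by verifying the three satisfied axioms, which it leaves to the reader; your CAN checks implicitly use the $\{x,y\}$-restricted reading of CAN, which is the one the paper itself uses.

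The only slip is in the NT argument for $R^{B1,\vartriangleright}$: applying NT to the transposition $\sigma$ gives $yR_{\succsim}x$, not $yP_{\succsim}x$, though this already contradicts $xP_{\succsim}y$.
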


\begin{proof}
    Verifying the set of axioms satisfied by each solution is left to the reader; we present counterexamples proving that they do not satisfy the relevant fourth axiom. 

    Let $\mathord{\succsim}_1 = (\mset{\{x,y\}, \{z\}}, \mset{\{x\}, \{y\}})$ and $\mathord{\succsim}_1' = (\mset{\{x,y\}, \{z\}}, \mset{\{x\}, \{y\}, \{x,z\}})$. Since $xI^{\tilde{B_1}}_{\succsim_1}y$, $R^{\tilde{B_1}}$ satisfying IPP would imply $xI^{\tilde{B_1}}_{\succsim_1'}y$, but $xP^{\tilde{B_1}}_{\succsim_1'}y$.

    Let $\mathord{\succsim}_2 = (\mset{\{x\}, \{y\}}, \mset{\{x, y\}})$ and $\vartriangleright = x \vartriangleright y$. We have $xP^{B_1, \vartriangleright}_{\succsim_2}y$, but since $\succsim_2$ is totally symmetric in $x$ and $y$, NT would imply $xI^{B_1, \vartriangleright}_{\succsim_2}y$.

    Let $\mathord{\succsim}_3 = (\mset{\{x,y\}}, \mset{\{x\}, \{y\}})$ and $\mathord{\succsim}_3' = (\mset{\{x\},\{x,y\}}, \mset{\{y\}})$. Since $xI^\text{const}_{\succsim_3}y$, $R^\text{const}$ satisfying MON would imply $xI^\text{const}_{\succsim_3'}y$, but $xI^\text{const}_{\succsim_3'}y$.

    Let $\mathord{\succsim}_4 = (\mset{\{x,y\}}, \mset{\{x\}, \{y\}}, \mset{\{y\}, \{x\}})$ and $\mathord{\succsim}_4' = (\mset{\{x,y\}, \{x\}}, \mset{\{y\}, \{y\}}, \mset{\{x\}})$. Then, $xI^\text{NWL}_{\succsim_4}y$ and $xP^\text{NWL}_{\succsim_4'}y$, which contradicts CAN.
    
\end{proof}

\section{Discussion}
A notable difference between this ``scoring'' solution and the classical Borda welfare function is the sensitivity to score changes. Indeed, in the classical voting framework, \citet{Young1975} shows that any social choice scoring function whose scores are evenly increasing --- heard as: the difference between two consecutive scores is constant and positive --- is the Borda social choice function. It is straightforward to show that this can be extended to social welfare scoring functions: any social welfare scoring function whose scores are evenly increasing is the Borda social welfare scoring function. It is natural to ask if this result still stands in the social ranking framework: is any SRS that gives evenly increasing scores to the equivalence classes of a social ranking the Borda SRS $R^{B_1}$? The answer is no.

Let $l\in\mathbb{N}_{+}$, $\mathord{\succsim}\in\mathcal{R}_l$ and $x\in X$. For $\alpha,\beta\in\mathbb{R}$ with $\beta>0$, we define the score
\[s_{\succsim}^{\alpha, \beta}(x) = \underset{k \in [l]}{\sum} (\alpha + (k-1) \beta) \cdot \underset{S \in m_k : x \in S}{\sum} m_k(S)\]
and the corresponding SRS $R^{\alpha, \beta}_{\succsim}$ as follows, for every $x,y\in X$: 
\[xR^{\alpha, \beta}y \Leftrightarrow  s^{\alpha, \beta}_{\succsim}(x) \geq s^{\alpha, \beta}_{\succsim}(y).\]

Let us first notice that for every strictly positive $\beta\in\mathbb{R}$ and for every $\mathord{\succsim}\in\mathcal{R}_l$, we have $R_{\succsim}^{0,\beta} = R_{\succsim}^{B_1}$, since for every  $x\in X$, $s_{\succsim}^{0,\beta}(x) = \beta \cdot s_{\succsim}^{B_1}(x)$. However, for every $\alpha\in\mathbb{R}\setminus\{0\}$ and every strictly positive $\beta\in\mathbb{R}$, there exists $\mathord{\succsim}\in\mathcal{R}_l$ such that $R_{\succsim}^{\alpha,\beta} \neq R_{\succsim}^{B_1}$. Indeed, by choosing an $\alpha\in\mathbb{R} \setminus \{0\}$, we can construct a solution that does not satisfy IPP. We next provide an example of how, by changing $\alpha$ and $\beta$, we can obtain substantially different rankings of individuals from the same initial social ranking. 
Let us define the following social ranking $\succsim$:
\[\mathord{\succsim} := (\{\{1,3\}, \{2,3\}, \{2\}\}, \{\{1\}, \{2\}, \{2\}, \{3\}\}, \{\{1,3\}, \{1\}, \{1\}, \{1\}, \{3\}\}).\]
Then, we have $2 R_{\succsim}^{B_1} 3 R_{\succsim}^{B_1} 1$, $3 R_{\succsim}^{5,4} 2 R_{\succsim}^{5,4} 1$, and $1 R_{\succsim}^{3,1} 3 R_{\succsim}^{3,1} 2$.

\section{Conclusion}
\label{section:conclusion}

This paper analyzed several definitions of the Borda rule as an SRS and 
indicated the first type,
$R^{B1}$, as the most appropriate, as it satisfies two interesting properties, 
consistency and closeness to unanimity. The idea of $R^{B1}$ is to give $(k-1)$ points to the coalitions in the $k$-th-worst class and rank the individuals by the total points their own coalitions achieve. One of the noteworthy lessons from our results is this scoring assignment. When applied to weak orders allowing indifferences, the Borda rule is usually defined by counting the number of other elements (not the equivalence classes) in voting contexts \citep[see, e.g.,][]{Terzopoulou2021}. Nevertheless, our results show that the usual approach does not guarantee consistency in social ranking contexts (Proposition~\ref{proposition:Borda_rules_and_ECON}). Further investigation of the class of scoring rules could be a future topic of research.

Finally, thanks to Proposition~\ref{proposition:banzhaf}, Theorem~\ref{theorem: characterization of B1} also provides an axiomatic characterization of the ranking $R^{\beta 1}$ according to the Banzhaf value. As a future research direction, it would be interesting to study the axiomatic foundation of the ranking provided by other game-theoretic solutions.

\section*{Acknowledgments}
We thank Michele Aleandri and Meltem \"{O}zt\"{u}rk for helpful discussions. Takahiro Suzuki was supported by JSPS KAKENHI Grant Numbers JP25K01302 and JP26K04808. Stefano Moretti acknowledges support of the ANR project THEMIS (ANR-20-CE23-0018). Rachel Ruellé acknowledges support of the ANR project GATSBII (ANR-24-CE23-6645).

\bibliography{library}

\end{document}